\DeclareMathOperator*{\argmax}{arg\,max}
\providecommand\phantomcaption{\caption@refstepcounter\@captype}
\newtheorem{theorem}{Theorem}
\newtheorem{proposition}[theorem]{Proposition}
\title{Multiple merger coalescent inference of effective population size}
\author{Julie Zhang, Julia A. Palacios}
\date{}
\begin{document}

\maketitle

\begin{abstract}   
Variation in a sample of molecular sequence data informs about the past evolutionary history of the sample's population. Traditionally, Bayesian modeling coupled with the standard coalescent, is used to infer the sample's bifurcating genealogy and demographic and evolutionary parameters such as effective population size, and mutation rates. However, there are many situations where binary coalescent models do not accurately reflect the true underlying ancestral processes. Here, we propose a Bayesian nonparametric method for inferring effective population size trajectories from a multifurcating genealogy under the $\Lambda-$coalescent. In particular, we jointly estimate the effective population size and model parameters for the Beta-coalescent model, a special type of $\Lambda-$coalescent. Finally, we test our methods on simulations and apply them to study various viral dynamics as well as Japanese sardine population size changes over time. The code and vignettes can be found in the \texttt{phylodyn} package. 

\end{abstract}

\section{Introduction} \label{sec:intro}
In phylodynamic inference, variation in a sample of molecular sequence data is used to learn about the past ancestral history of a sample, usually represented by a bifurcating genealogy. Kingman's coalescent is then typically used as a prior model on the genealogy, parameterized in terms of a parameter of interest called the effective population size $N_e(t)$ \citep{Kingman1982, slatkin2001simulating}. It has been shown that Kingman's coalescent is a good approximation to the distribution of the sample's ancestry under several underlying population dynamics \citep{Wakeley2009a}. However, Kingman's coalescent assumes the population variance in the number of offspring is sufficiently small such that at most two lineages merge at a time. Such bifurcating tree models may not fit all population dynamics of interest. This can occur in the study of infectious disease dynamics in the presence of superspreader events. Examples include tuberculosis \citep{menardo2021multiple} and wild type polio \citep{li2017quantifying}. The same situation arises in highly fecund species populations, where certain individuals are capable of reproducing offspring on the order of the population size; this is known as reproductive skew or sweepstakes reproduction \citep{eldon2018evolution,eldon2020evolutionary}. Some marine species undergo sweepstakes reproduction, such as the Japanese sardine \citep{niwa2016reproductive}, Pacific oysters \citep{sargsyan2008coalescent}, and Korean seaweed \citep{byeon2019origin}. Understanding the phylodynamics of these populations may have impacts on the economy and the environment. Multifurcating trees can also arise in populations undergoing strong positive selection \citep{der2012dynamics, eldon2023sweepstakes,arnason2023sweepstakes}, and large sample sizes from small populations \citep{wakeley2003gene}. Here we assume neutral evolution without recombination. In addition, we only consider multifurcating trees without simultaneous mergers, a situation that would arise in diploid populations with sweepstakes reproduction \citep{mohle2003coalescent}. \\

The $\Lambda$-coalescent is a multiple merger coalescent model (MMC) proposed as a generalization to binary coalescent processes \citep{pitman1999coalescents, sagitov1999general}. The $\Lambda$-coalescent includes Kingman's coalescent, the Beta-coalescent \citep{berestycki2007beta} and the Psi-coalescent \citep{eldon2006coalescent} as special cases. In particular, the Beta-coalescent is a one-parameter model that was shown to be the ancestral limit of a sample obtained from the Cannings population model with a heavy tail offspring distribution and constant population size \citep{schweinsberg2003coalescent,berestycki2009recent}. \cite{hoscheit2019multifurcating} extended the Beta-coalescent model to allow variable effective population sizes and heterochronous sampling, i.e. when tips have different dates. The effective population size under Kingman's coalescent can be interpreted as the population size under the Wright-Fisher model that has the same genetic drift as the population under study \citep{Wakeley2009a}. Under the Beta-coalescent, a similar interpretation holds for a population that evolves under the Cannings population model with a specific offspring distribution. See \cite{eldon2020evolutionary} for a recent review of MMC models and \cite{korfmann2024simultaneous} for modeling recombining genealogies with MMC.\\ 

Most applications of the $\Lambda$-coalescent aimed to distinguish between population growth, such as exponential growth, and multiple merger genealogy from summaries of molecular data including the site frequency spectrum (SFS) \citep{eldon2015can,koskela2018multi,matuszewski2018coalescent,koskela2019robust,freund2021impact}. A recently proposed method estimates the base measure $\Lambda$ that describes the coalescent rates from the SFS \citep{pina2023estimating}. However all these methods assume constant effective population size or a fixed growth model and the infinite sites mutation model. Instead, a method for inferring effective population sizes under the Beta-coalescent from a given multiple merger genealogy is proposed in \cite{hoscheit2019multifurcating} . Although this method ignores genealogical uncertainty, it opens the door to applications in which an estimated genealogy is obtained under any mutation model (for example via maximum likelihood estimation). The authors propose an extension of the classic skyline plot \citep{ho2011skyline} for inferring the effective population size, however there is no joint inference of effective population size and the characteristic measure that describes coalescent rates. \\

In this manuscript, we propose a method for joint inference of the effective population size $N_{e}(t)$ and the characteristic coalescent measure under the Beta-coalescent. We expand upon the work of \cite{pal12, lan2015efficient}, and develop a Bayesian nonparametric phylodynamic approach that relies on Gaussian Markov random field priors on $N_{e}(t)$. We evaluate the performance of proposed methods on simulations and apply them to analyze two infectious diseases. Finally, we re-analyze the reproductive skew hypothesis in Japanese sardine populations \citep{niwa2016reproductive} using our methods. Those who are interested in examining case studies can focus on Sections \ref{sec:results} and \ref{sec:applications}.

\section{Background on $\Lambda$-coalescent} \label{sec:background}

The standard Kingman's $n$-coalescent is a backward-in-time Markov jump chain on binary partitions of $[n]=\{1,\ldots,n\}$ whose full realization is a genealogy of $n$ individuals. The process starts with $n$ singleton lineages at time 0. At each step, two lineages are chosen uniformly at random to coalesce, continuing until there is a single lineage at the root \citep{Kingman1982}. The coalescent holding times are exponentially distributed with rate $\binom{A(t)}{2}$, when there are $A(t)$ ancestral lineages at time $t$. Kingman's coalescent has also been extended to incorporate variable population size, also termed the effective population size $N_e(t)$ \citep{slatkin2001simulating}, and samples collected at different times (heterochronous sampling) \citep{Felsenstein1999}. Here, $N_e(t)$ is interpreted as the size of an ideal population that exhibits the same level of genetic drift under a standard Wright-Fisher population model \citep{Wakeley2009b}, and can be thought of as a relative measure of genetic diversity over time. The effective population size affects the coalescent times: larger $N_e(t)$ implies smaller coalescent rates and longer time until coalescence. \\

The $\Lambda$-coalescent is a generalization of the standard coalescent that allows for multiple mergers \citep{sagitov1999general,pitman1999coalescents}. The process is a Markov jump chain on partitions of $[n]=\{1,\ldots,n\}$ whose full realization is a multiple merger genealogy of $n$ individuals. The process starts with $n$ singleton lineages at time 0. At each step, two or more lineages are chosen to coalesce, continuing until there is a single lineage at the root (Figure \ref{fig:multif_tree_ex}). Here, $k\geq 2$ lineages merge at rate $\lambda_{A(t),k}$, where $A(t)$ is the number of extant lineages at time $t$. The coalescent holding times are exponentially distributed with rate $\lambda_{A(t)} = \sum_{k=2}^{A(t)} \binom{A(t)}{k} \lambda_{A(t), k}$, also called the total coalescent rate. Conditional on a coalescent event at time $t$ when there are $A(t)$ lineages, the distribution of the block (or multiple merger) size is %we sample the number of coalescing lineages $m$ with from a discrete distribution on $\{2,..., A(t)\}$ where 
\begin{equation} \label{eq:block_lik_rates}
    \mathbb{P}(X=k) = \frac{\binom{A(t)}{k}  \lambda_{A(t), k}}{ \sum_{i=2}^{A(t)} \; \; \binom{A(t)}{i} \lambda_{A(t),i}}, 2 \leq k \leq A(t).
\end{equation}
The $k$ lineages in the block are then chosen uniformly at random among the $A(t)$ lineages. 
The rate $\lambda_{b,k}$ at which a specific $k$-block of lineages merges when there are $b$ lineages is defined to be
\begin{equation} \label{eq:lambda_merge_rates}
    \lambda_{b,k} = \int_0^1 x^{k-2} (1-x)^{b-k} \Lambda(dx),
\end{equation}
where $\Lambda$ is a measure on $[0,1]$. \\

\cite{pitman1999coalescents} show that any simple Markovian MMC process that satisfies: exchangeability of the lineages and consistency of the merger rates defined by $\lambda_{b,k} = \lambda_{b+1, k} + \lambda_{b+1, k+1}$ must be the $\Lambda$-coalescent. The intuition behind the consistency condition is as follows. Suppose we have a specific $k$-block of lineages out of $b$ lineages. If there are $b+1$ total lineages, then the extra lineage can coalesce with the $k$-block at rate $\lambda_{b+1, k+1}$  or the extra lineage does not coalesce with the $k$-block at rate $\lambda_{b+1, k}$, and so the additive property must hold. The $\Lambda$-coalescent can also be constructed as a Poisson point process on $(0,1]\times (0,\infty)$ with intensity measure $\Lambda(dx)x^{-2} \otimes dt$; see \cite{berestycki2009recent} for a detailed theoretical construction. We can interpret points $(x_i, t_i)$ from this Poisson point process as flipping a coin with probability $x_i$ of heads for each block and merging all blocks that are heads at time $t_i$.  \\

Some special cases of the $\Lambda$-coalescent include: 
\begin{itemize}
    \item $\Lambda= \delta_0$, i.e. the point mass at 0: In this case, $\lambda_{b,k}=1$ for $k=2$ and corresponds to \textit{Kingman's coalescent.} 
    \item $\Lambda = \delta_1$, i.e. the point mass at 1: In this case, then nothing happens for an exponential amount of time with mean 1, at which point all lineages coalesce and we get the \textit{star-shaped coalescent.}
    \item $\Lambda =$ Beta$(2-\alpha,\alpha)$ for $0<\alpha<2$: In this case, $\lambda_{b,k} = \text{B}(k-\alpha, \alpha+b-k) / \text{B}(2-\alpha,\alpha)$ where $\text{B}(2-\alpha,\alpha) = \frac{\Gamma(2-\alpha) \Gamma(\alpha)}{\Gamma(2)}$ and this is called the \textit{Beta-coalescent.} A special case when $\alpha=1$ is known as the U-coalescent or the \textit{Bolthausen-Sznitman coalescent} and \[ \lambda_{b,k} = \frac{(k-2)! ( b-k)!}{(b-1)!} = \left [ (b-1) \binom{b-2}{k-2}  \right ] ^{-1} \]
\end{itemize}

The definition of the base measure $\Lambda$, and in particular the value of $\alpha$ in the Beta-coalescent, affects both the shape of the tree topology and the distribution of the coalescent times. To see this, in Figure~\ref{fig:sim-mean-block-size}(a) we plot the average block size for different values of $\alpha$ and number of tips $n$. We simulate 500 trees for each value of $\alpha$ and $n$, and then compute the average block size across all trees and multiple merger events. We see the average block size decreases to 2, the binary tree, as $\alpha \to 2$. The average block size also decreases with $n$, since trees with more tips are expected to have larger block sizes. In Figure~\ref{fig:sim-mean-block-size}(b) we show the total coalescent rate for different values of $\alpha$ and lineages $b$, divided by $\binom{b}{2}$, the total rate under Kingman's coalescent. The coalescent rate increases exponentially with $\alpha$. The smaller the value of $\alpha$, the smaller the coalescent rate resulting in genealogies with longer branch lengths than with Kingman's coalescent.

\begin{figure}[h]
\setcounter{subfigure}{0}
    \centering
    \sidesubfloat[]{\includegraphics[width=0.48\linewidth]{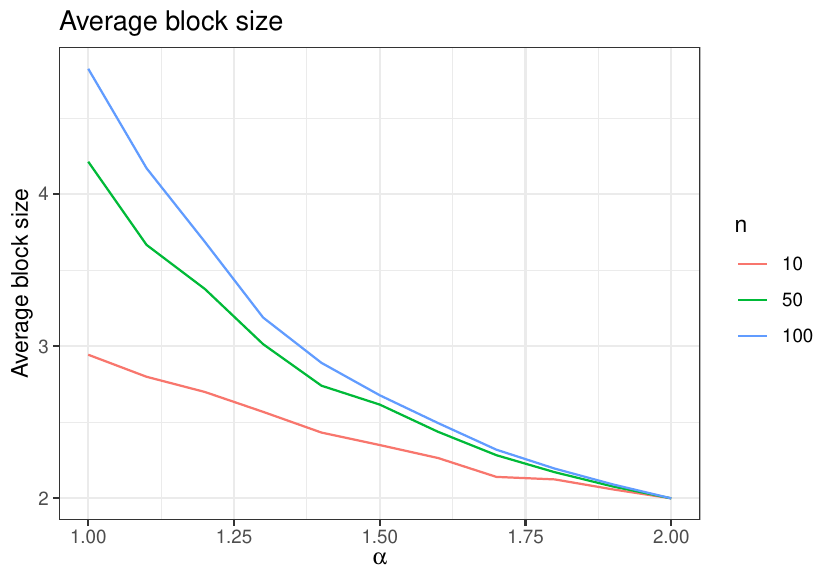}}
    \sidesubfloat[]{\includegraphics[width=0.48\linewidth]{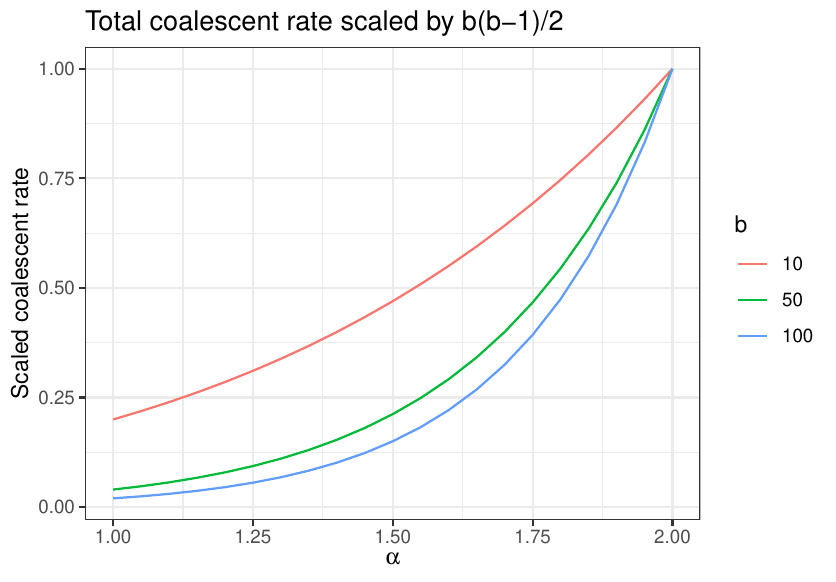}}
    \caption{\textbf{The effect of $\alpha$ in tree topology and coalescent times.} \textbf{(a):} Average block size for different values of $\alpha$ and number of tips. \textbf{(b):} Total coalescent rates $\lambda_b= \sum_{k=2}^b \binom{b}{k} \lambda_{b,k}$ when there are $b$ lineages for different values of $\alpha$, scaled by $\binom{b}{2}$, the rate under Kingman's coalescent.} 
    \label{fig:sim-mean-block-size}
\end{figure}

\newpage 
\subsection{Heterochronous $\Lambda$-coalescent with variable $N_{e}(t)$}
The $\Lambda$-coalescent was recently extended to accommodate heterochronous sampling and variable effective population size \citep{hoscheit2019multifurcating}. It is important to account for heterochronous sampling, especially for rapidly evolving organisms: the coalescent time distribution will be restricted since samples cannot coalesce before they have been sampled. Here, we assume that a rooted and timed multifurcating genealogy of $n$ haploid samples is available to us, for example, the estimated genealogy obtained via maximum likelihood from a set of $n$ observed molecular sequences. We further assume that sequences, at the tips of the genealogy, are collected at $L$ different sampling times and that the sampling process is independent of the underlying population process. Let $\boldsymbol{n}=(n_\ell)_{\ell=1:L}$ denote the number of samples collected at respective sampling times $\boldsymbol{s}=(s_\ell)_{\ell=1:L}$, with $s_1=0$, $s_{j-1} < s_{j}$ for $j=2,\dots, L$, and $n=\sum_{j=1}^{L}n_{j}$ is the total number of samples. In the genealogy, tuples of lineages merge backward in time into a common ancestor at coalescent times denoted by $\boldsymbol{t}=(t_{1},\ldots, t_{K})$, where $K$ is the step at which the most recent common ancestor (MRCA) is reached (see Figure \ref{fig:multif_tree_ex}). \\

\begin{figure}[h]
    \centering
    \includegraphics[scale=0.75]{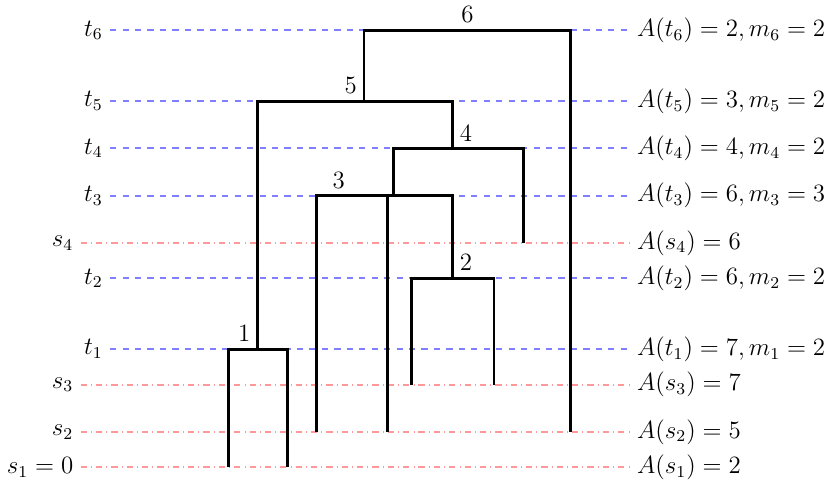}
    \caption{A multifurcating tree with 8 total lineages and 6 coalescent events labeled with the coalescent times $\boldsymbol{t}$, sampling times $\boldsymbol{s}$, block sizes $\boldsymbol{m}$, and extant lineages $A(t)$. Here there are $L=3$ sampling times.}
    \label{fig:multif_tree_ex}
\end{figure}

Let $\boldsymbol{m}=(m_{1},\ldots,m_{k})$ denote the corresponding vector of block sizes at each coalescent time. The rate at which pairs of lineages coalesce depends on the number of lineages, the $\Lambda$ measure, and the effective population size $(N_e(t))_{t\geq 0}:=N_{e}(t)$. Finally, let the number of extant lineages at time $t$ be given by $A(t)=\sum^{L}_{i=1} n_i \mathds{1}(s_{i}<t) - \sum^{K}_{k=1} (m_k-1) \mathds{1}(t_{k}<t)$. Under this model, the density of observed statistics $\boldsymbol{m}$ and $\boldsymbol{t}$ is:
\begin{equation} \label{eq:lambda-coal-density}
    p(\boldsymbol{m,t}\mid \boldsymbol{n,s},N_{e}(t),\Lambda)=\exp\left(-\int^{\infty}_{0}\sum^{A(u)}_{j=2}\frac{\binom{A(u)}{j}\lambda_{A(u),j}}{N_{e}(u)}du \right)\prod^{K}_{k=1}\frac{\binom{A(t_{k})}{m_{k}}\lambda_{A(t_{k}),m_{k}}}{N_{e}(t_{k})}
\end{equation}
The density corresponds to the density of a point process of coalescent events with rate $\lambda_{A(t_k)} = \sum_{j=2}^{A(t_k)} \binom{A(t_{k})}{j} \lambda_{A(t_k), j}$  
when there are $A(t_k)$ lineages at time $t_{k}$, and the probability of $K$ coalescent events each being of size $m_k$: 
\begin{equation} \label{eq:density-block-size-only}
    \prod_{k=1}^K \frac{\binom{A(t_{k})}{m_{k}}\lambda_{A(t_{k}),m_{k}}}{\lambda_{A(t_k)}}.
\end{equation}
Under this MMC framework, we ignore the effects of population structure, recombination and selection. \\

In this manuscript, we are interested in estimating the effective population size $N_{e}(t)$ and the $\alpha$ parameter in the Beta$(2-\alpha, \alpha)-$coalescent with $\alpha\in (0,2]$ because it has desirable properties. As mentioned above, it is the limiting distribution of a specific Cannings population model. This model also converges to Kingman's coalescent model as $\alpha \to 2$ from below. \cite{hoscheit2019multifurcating} showed that the $\alpha$ parameter in the Beta-coalescent model is indicative of the level of superspreading in an infectious population. The mathematical basis of the Beta-coalescent for $\alpha \in (0,1]$ does not change, so for the sake of implementation, we allow $\alpha \in (0,2]$ in parameter estimation. However, most simulation results will focus on Beta-coalescent trees generated with $\alpha>1$. \\

Proposition~\ref{prop:upperbound} in Appendix~\ref{appendix:c} states that when there are $A(t)=b$ lineages, the total Lambda coalescent factor can be approximated as follows: 
\begin{equation} \label{eq:bound}
\lambda_{b}=\sum^{b}_{k=2}\binom{b}{k}\lambda_{b,k}\approx (b-1)\left(\frac{b}{2}\right)^{\alpha-1}. \end{equation}
That is, the logarithm of the total coalescent rate when there are $A(t)=b$ lineages is
\[ 
\log \left(\int^{t}_{0}\frac{\lambda_{b}du}{N_{e}(u)} \right) \approx (\alpha-1)\log(b/2)+ \log(b-1)+\log \left(\int^{t}_{0}\frac{du}{N_{e}(u)}\right). \]
Although we do not use this approximation for our inference methods, it provides intuition about the relationship between coalescent time distribution and its dependence on $\alpha$. \\

Interpretation of $N_{e}(t)$ requires knowledge of the population's variance in the number of offspring. In particular, under the Cannings population model, with variance of the number of offspring $c_{N}$, and assuming $c_N$ converges to 0 as $N\to\infty$, and the offspring distribution follows $\mathbb{P}(\nu_1 >x) \sim C x^{-\alpha}$ for $1\leq \alpha<2$ \citep{schweinsberg2003coalescent, freund2020cannings}, then the coalescent process converges to the Beta coalescent with \[N_{e}(t)=\lim_{N\to\infty} \frac{N(\lceil x/c_N \rceil)}{1/c_N}. \] In general, we assume populations whose effective population size is scaled in units of $1/c_{N}$ generations. We note that this factor depends on $\alpha$ in the case of the Cannings model, however, as long as $c_{N}$ remains constant over time, we can jointly estimate $N_{e}(t)$ and $\alpha$ for the Beta-coalescent and solve for $N(t)$ if $c_{N}$ is known. Alternatively, the exact parametric form of $c_{N}$ could be incorporated directly in Eq.~\ref{eq:lambda-coal-density}.

% \newpage 
\subsection{Estimation of $N_{e}(t)$ using a GMRF prior}
We follow the approach developed in \cite{pal12} and model  $N_e(t)= \exp[\gamma(t)]$, where $\gamma(t)$ is \textit{a priori} an intrinsic Gaussian Markov random field \citep{rue2005gaussian}, that is $\gamma(t)$ is a random piece-wise constant function with change points placed at a regular grid of $D$ points $\{x_1,..., x_D\}$. That is
\[N_{e}(t)=\sum_{d=1}^{D-1} \exp(\gamma_d)1_{(x_{d},x_{d+1}]}(t),\]
and $(\gamma_{1},\ldots,\gamma_{D-1})\sim MVN(0,(\tau Q)^{-1})$, $\tau$ is the precision parameter with Gamma prior $Gamma(0.001,0.001)$ and $Q$ is the corresponding inverse covariance kernel of a random walk with boundary correction. For more technical details, see \cite{pal12}. The integral in the exponent of Eq.~\ref{eq:lambda-coal-density} is then approximated by its Riemann sum. We adapted the implementation for binary trees in \texttt{phylodyn} \citep{karcher2017phylodyn} to the MMC with likelihood Eq.~\ref{eq:lambda-coal-density}. Given $\boldsymbol{m,s,t,n}$, and $\alpha$, our implemented \verb|R| function \verb|BNPR_Lambda()| estimates posterior mean and 95\% BCI of $N_{e}(t)$ by an Integrated Nested Laplace Approximation (INLA) \citep{rue2009approximate}.

\section{Methods} \label{sec:methods}

\textbf{Estimating $\alpha$ from tree topology only}: By the end of Section 2, we showed that $\alpha$ affects both the tree shape and the distribution of coalescent times. While coalescent times are the sufficient statistics for inferring $N_{e}(t)$ in the standard coalescent process, coalescent times and tree topology (i.e. merging block sizes) are the sufficient statistics for estimating $N_{e}(t)$ and $\alpha$ under the Beta-coalescent. However, it is possible that in some applications, there is no reliable estimation of coalescent times and it may be preferable to estimate $\alpha$ from the tree topology only. In this case, we propose to estimate $\alpha$ using just the block sizes $\boldsymbol{m}$ by maximizing the following pseudo-likelihood of the block sizes under the Beta$(2-\alpha, \alpha)$-coalescent (recall Equation~\ref{eq:density-block-size-only}):
\begin{equation}\label{eq:block_size_mle}
    \hat{\alpha}^{BS}=\argmax_{\alpha} \mathbb{P}(\boldsymbol{m} \mid \boldsymbol{n}, \alpha ) = \argmax_{\alpha} \prod^K_{k=1} \frac{\binom{A(t_k)}{m_k} \lambda_{A(t_k),m_k}(\alpha)}{ \sum_{i=2}^{A(t_k)} \binom{A(t_k)}{i} \lambda_{A(t_k),i }(\alpha)} 
\end{equation}
We know there is information about $\alpha$ in both the block sizes and the coalescent times. However, we show in Section~\ref{sec:results}, estimation of $\alpha$ based on tree topology only (block-sizes) is quite accurate. \\ 

\textbf{Hybrid estimation of $\alpha$ and $N_e(t)$}: Our first approach for estimating $\alpha$ and $N_{e}(t)$ from a multiple merger genealogy is a hybrid approach, where we iteratively update $\alpha$ and $N_e(t)$. We first initialize $\alpha^{(0)}= \hat{\alpha}^{BS}$, to be the block size only MLE. Given $\alpha$, we then use INLA, as described in Section 2.2, and update $N_e(t)$ by the posterior median. Given the current value of $N_{e}(t)$, we then find $\alpha$ by maximum likelihood estimation, and iterate these last two steps until convergence. \\ 

\textbf{Joint posterior inference of $N_{e}(t)$ and $\alpha$}: We place a GMRF prior on $\log N_{e}(t)$ as described in Section 2.2 with a Gamma prior on precision parameter $\tau$, and a uniform $U(0,2)$ prior on $\alpha$. We approximate $\mathbb{P}(\alpha, N_e(t), \tau \mid \boldsymbol{g})$ via a Metropolis-within-Gibbs algorithm, where $\boldsymbol{g}$ denotes the multifurcating genealogy. The precision parameter and the GMRF are usually highly correlated leading Markov chain Monte Carlo methods to have poor convergence and slow mixing. To solve this problem, we sample from the conditional distribution of $\log N_{e}(t)$ and $\tau$ given all other model parameters and genealogy, with the split Hamiltonian Monte-Carlo (sHMC) method adapted from \cite{lan2015efficient}. To sample $\alpha$ from the full marginal 
\begin{equation} \label{eq:alpha_cond}
        \mathbb{P}(\alpha \mid \boldsymbol{g}, N_e^{(k)}(t),\tau) =  \frac{\mathbb{P} (\boldsymbol{g} \mid  N_e^{(k)}(t), \alpha)}{\int_0^2 \mathbb{P} (\boldsymbol{g} \mid  N_e^{(k)}(t), \alpha) d\alpha},
\end{equation}
we discretize the parameter space of $\alpha \in [0,2]$ into intervals of length $0.005$ and approximate the denominator of Eq.~\ref{eq:alpha_cond} by the Riemann sum. That is, we set $\alpha_m$ to be the midpoint of interval $I_m=[0.005m, 0.005(m+1)]$ for $m=1,...,400$. The numerator of Equation~\ref{eq:alpha_cond} is proportional to the likelihood of Equation~\ref{eq:lambda-coal-density}, and the proportionality constant cancels in the fraction. Then 
\begin{equation*}
    \mathbb{P}(\alpha \in I_m \mid  \boldsymbol{g}, N_e^{(k)}(t)) \approx \frac{\mathbb{P} (\boldsymbol{g} \mid  N_e^{(k)}(t), \alpha_m)}{\sum_{i=1}^{400} \mathbb{P} (\boldsymbol{g} \mid  N_e^{(k)}(t), \alpha_i)}.
\end{equation*}
From this discrete distribution, we sample $I_m \sim \mathbb{P}(\alpha \in I_m \mid  \boldsymbol{g}, N_e^{(k)}(t)), \alpha^{(k+1)} \sim Unif[I_m]$ to get our final update. 

\section{Results} \label{sec:results}
We first test our methods on simulated data. We implement a function that can simulate a Beta-coalescent genealogy under variable effective population size and heterochronous sampling. We use the true genealogy as input in simulation studies. To evaluate their performance in estimating $\alpha$, we use the mean squared error (MSE). For evaluating the performance in estimating the effective population size, we compute four metrics: coverage, bias, deviance, and MSE at the grid points, with precise definitions in Appendix~\ref{appendix:a}. \\

To see how well we can estimate $\alpha$ from tree topology alone (block-size data), we generated 1000 realizations from the Beta-coalescent under different values of $\alpha, n$. Table~\ref{tab:topology_only_mle} in Appendix~\ref{appendix:b} shows the mean, bias, and MSE of the estimated $\alpha$ using only the block-size data. Overall, we see $\hat{\alpha}^{BS}$ underestimates $\alpha$ and the accuracy of the estimator increases with $n$ and $\alpha$. When $\alpha=1.5$, the accuracy of this estimator is quite high for trees with 100 tips. When $\alpha=1.8$, the results are good for trees with 50 tips. \\ 

In the left plot in Figure~\ref{fig:sim_tree_ex}, we show one isochronous tree with 50 tips, generated with $\alpha=1.5$ and exponential growth $N_{e}(t)=1000e^{-t}$. The block-size MLE is $\hat{\alpha}^{BS}\approx 1.615$ and the hybrid estimate is $\hat{\alpha}^{H} \approx 1.759$. The posterior median and mean of $\alpha$ inferred from MCMC are $1.594$ and $1.585$ respectively, which are the closest to the true $\alpha$ value. Figure~\ref{fig:one_tree_posterior_alpha_dist} in Appendix~\ref{appendix:b} shows the posterior distribution of $\alpha$ from MCMC. The right plot in Figure~\ref{fig:sim_tree_ex} depicts the reconstructed effective population size trajectories and their 95\% credible regions estimated with each method, as well as the true trajectory. Table~\ref{tab:sim_tree_ex_perf_meas} in Appendix~\ref{appendix:b} details the performance measures calculated for inferred $N_e(t)$ with each method. We see that the inferring $N_e(t)$ using the true $\alpha$ has the lowest bias, deviance, and MSE, while out of the three proposed methods, MCMC performs the best and hybrid estimation the worst. This is expected since the $\hat{\alpha}^H$ is the worst estimate. In Appendix~\ref{appendix:b} Figure~\ref{fig:one_tree_trace_plot}, we show trace plots of $\alpha$ and $\log \; N_e(t)$ at a particular value of $t$ to demonstrate well-mixing. \\

\begin{figure}[h]
\setcounter{subfigure}{0}
    \centering
    \sidesubfloat[]{\includegraphics[width=0.36\linewidth]{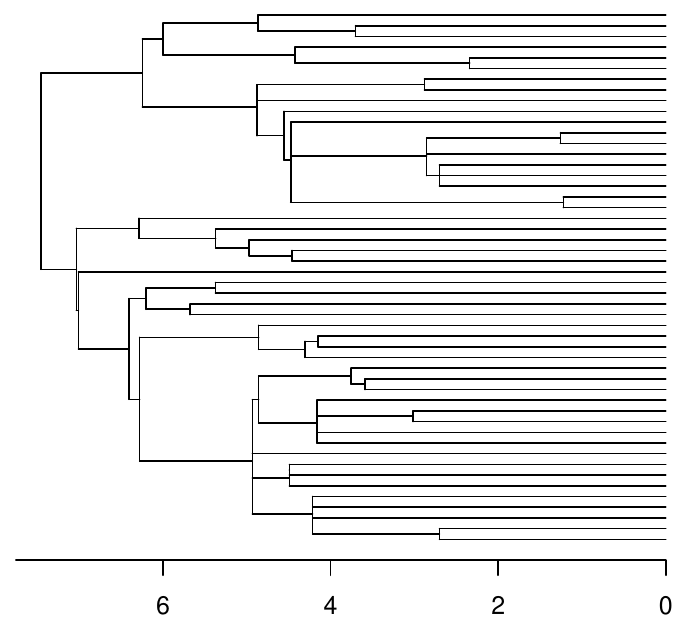}}
    \sidesubfloat[]{\includegraphics[width=0.54\linewidth]{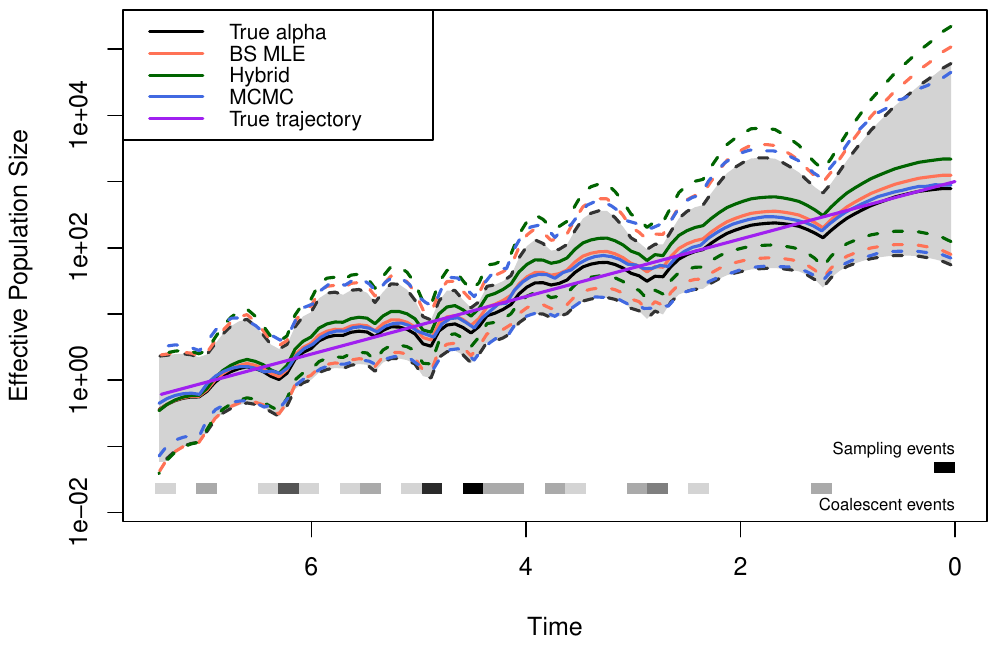}}
    \caption{\textbf{Comparison of different methods estimating  exponential $N_{e}(t)$.} \textbf{(a):} A simulated genealogy under the Beta-coalescent with $\alpha=1.5$ and the exponential growth $N_{e}(t)$ trajectory. The genealogy has 39 internal nodes and 50 tips. \textbf{(b):} The effective population size trajectories reconstructed by different methods. Note the y-axis is plotted in log-scaled. Solid lines are the median trajectories, and dotted lines are the 95\% credible bands. The true trajectory is plotted in purple, and the solid black line and shaded band are the median and 95\% credible band of the inferred $N_e(t)$ using the true $\alpha=1.5$.}
    \label{fig:sim_tree_ex}
\end{figure}

To evaluate the overall performance of our methods for the joint inference of $\alpha$ and $N_{e}(t)$, we simulated 50 trees for each combination of $\alpha \in \{1,1.5,1.8\}$, $n \in \{20, 50, 100\}$ with three sampling schedules: (1) all sampled at time 0 (isochronous), (2) heterochronous with 2 sampling times, 50\% split, and (3) heterochronous with 4 sampling times, split 50\%, 30\%, 10\% and 10\% across the 4 times. We also considered three effective population size trajectories: (1) Uniform (constant): $N_e(t) =100$, (2) Exponential growth: $N_e(t)=1000e^{-t}$, and (3) Boom-bust: $N_e(t)= 1000 e^{-|t-1|}$. \\

For the MCMC method, we run 20,000 iterations discarding the first 10\% as burn-in. For each tree, we obtain the following set of results for $\alpha$: the block-size $\hat{\alpha}^{BS}$, the estimate from the hybrid method $\hat{\alpha}^H$, and a posterior distribution of $\alpha$ estimated via MCMC. We also infer the effective population size at a regular grid of 100 points using INLA and the true $\alpha$ value, the two estimates $\hat{\alpha}^{BS}$ and $\hat{\alpha}^H$, as well as the posterior distribution of $N_e(t)$ obtained by MCMC. \\ 

Figure~\ref{fig:all_traj_alpha_boxplots} in Appendix~\ref{appendix:b} shows the boxplots of estimated $\alpha$ values for each simulated genealogy with three big blocks corresponding to the three $N_{e}(t)$ scenarios considered. The posterior median is used as the final $\alpha$ estimate in the MCMC case. Each plot corresponds to different $\alpha,n$ combinations, with the number of tips $n$ varying across rows and $\alpha$ values varying across columns. The three different sampling schedules are indicated by the color of the boxplots. The black line shows the true $\alpha$ value. First, we see the results do not vary much across the different sampling times, which is a good sign of robustness, and so going forward will pool results from the three different sampling times together. We also see that accuracy increases with $\alpha$ and the number of tips $n$. \\

\begin{table}[!h]
\centering
\begin{tabular}{cclccccccccccc}
\hline

&& & \multicolumn{3}{c}{\multirow{2}{*}{\textbf{Coverage}}} && \multicolumn{3}{c}{\multirow{2}{*}{\textbf{Deviance}}} & & \multicolumn{3}{c}{\multirow{2}{*}{\textbf{MSE}}} \\
&&& \multicolumn{2}{c}{} & \multicolumn{2}{c}{} & \multicolumn{2}{c}{} & \multicolumn{2}{c}{}  \\ \hline
& $N$ & Method & Unif. & Exp. & BB  && Unif. & Exp. & BB  & & Unif. & Exp. & BB \\ 
  \hline
\multirow{9}{*}{$\alpha=1$} & \multirow{3}{*}{20} & BS MLE & 0.92& 0.96 & 0.95  &  & \textbf{0.4} & 1.23 & 1.57 && \textbf{19.34 }& 117.14 & 287.95\\ 
  & & Hybrid &0.91& 0.84 & 0.84 &&  0.55 & 3.76 & 5.03 & & 31.7 & 1796.08 & 8197.65 \\ 
  & & MCMC & \textbf{0.99} & \textbf{0.98} & \textbf{0.98} &&0.54 & \textbf{0.8} & \textbf{0.81} && 34.59 & \textbf{94.34} & \textbf{118.74}\\ \cline{2-14}
  & \multirow{3}{*}{50} & BS MLE & 0.9 & 0.94 & 0.93 && 0.37 & 0.74 & 0.82 & & 15.14 & 78.4 & 156.55  \\ 
  & & Hybrid & 0.9 & 0.76 & 0.79 && \textbf{0.356} & 2.463 & 2.937 & & \textbf{15.101} & 627.103 & 2389.601 \\ 
  & & MCMC & \textbf{0.97} & \textbf{0.97} &\textbf{0.97} && 0.42 & \textbf{0.61} & \textbf{0.71} & & 25.6 & \textbf{74} & \textbf{132.66}  \\ \cline{2-14}
  & \multirow{3}{*}{100} & BS MLE & 0.89 & 0.95 & 0.97 & &\textbf{0.32} & 0.56 & 0.63 &  & \textbf{12.49} & \textbf{42.99} & 93.95 \\ 
  & & Hybrid & 0.88 & 0.72 & 0.76 &&0.35 & 2.21 & 2.66 & & 15.02 & 671.61 & 2454.44 \\ 
  & & MCMC & \textbf{0.98} & \textbf{0.99} & \textbf{0.98} && 0.393 &\textbf{ 0.5} & \textbf{0.52 } &  & 20.06 & 49.37 & \textbf{88.1}\\ \cline{2-14}
  \multirow{9}{*}{$\alpha=1.5$} & \multirow{3}{*}{20} & BS MLE & 0.95 & 0.97 & 0.96 && \textbf{0.32} & 0.79 & 0.88 &  & \textbf{11.36} & 63.39 & 130.03 \\ 
  & & Hybrid & 0.95 & 0.94 & 0.95 && 0.33 & 1.28 & 1.67 & & 11.6 & 121.85 & 677.61 \\ 
  & & MCMC & \textbf{0.99 }&\textbf{ 0.97} & \textbf{0.98 }&& 0.33 & \textbf{0.54} & \textbf{0.57 } & & 14.22 & \textbf{63.44} & \textbf{85.45} \\ \cline{2-14}
  & \multirow{3}{*}{50} & BS MLE & 0.95 & 0.97 & 0.96 && \textbf{0.26} & 0.66 & 0.68 & & \textbf{7.47} & 60.97 & 125.58 \\ 
  & & Hybrid & 0.95 & 0.94 & 0.95 && 0.27 & 1.05 & 1.12 & & 8.13 & 115.94 & 497 \\ 
  & & MCMC & \textbf{0.98} &\textbf{ 0.99} & \textbf{0.98} && 0.34 & \textbf{0.58} & \textbf{0.57} & & 14.71 & \textbf{52.19} & \textbf{85.16}\\ \cline{2-14}
  & \multirow{3}{*}{100} & BS MLE & 0.95 & 0.94 & 0.95 && \textbf{0.21} & 0.43 & 0.46 & & \textbf{5.24} & 36.06 & 50.33 \\ 
  & & Hybrid & 0.95 & 0.85 & 0.87 && 0.21 & 0.74 & 0.81 & & 5.33 & 71.44 & 193.77 \\ 
  & & MCMC &\textbf{ 0.99 }&\textbf{ 0.98} & \textbf{0.98} && 0.28 & \textbf{0.41} & \textbf{0.41} & & 9.84 & \textbf{34.78} & \textbf{49.26} \\ \cline{2-14}
  \multirow{9}{*}{$\alpha=1.8$} & \multirow{3}{*}{20} & BS MLE & 0.98 & 0.98 & 0.98 &&\textbf{0.23} & 0.66 & 0.74 & & 6.44 & 60.77 & 153.67\\ 
  & & Hybrid & 0.99 & 0.97 & 0.98 && 0.24 & 0.74 & 0.86 & & \textbf{6.3} &\textbf{ 56.85 }& 207.42 \\ 
  & & MCMC & \textbf{1} & \textbf{0.98} &\textbf{ 0.99} && 0.34 & \textbf{0.5} & \textbf{0.57} & & 14.7 & 59.86 & \textbf{86.87} \\ \cline{2-14}
  & \multirow{3}{*}{50} & BS MLE & 0.98 & 0.98 & 0.98 &&\textbf{ 0.2 }& 0.44 & 0.49 & & \textbf{4.25} & 36.24 & 81.19 \\ 
  & & Hybrid & 0.98 & 0.98 & 0.98 && 0.22 & 0.56 & 0.64 & & 5 & 42.94 & 129.35 \\ 
  & & MCMC & \textbf{0.99} & \textbf{0.99} & \textbf{0.99 }&& 0.29 & \textbf{0.41 }& \textbf{0.41 } & & 10.25 & \textbf{27.56} & \textbf{56.11} \\ \cline{2-14}
  & \multirow{3}{*}{100} & BS MLE & 0.97 & 0.96 & 0.96 && 0.17 & 0.43 & 0.43 & & 3.6 & 23.18 & 37.24 \\ 
  & & Hybrid & 0.97 & 0.96 & 0.97 &&\textbf{0.16} & 0.48 & 0.47 & & \textbf{3.27} & 25.03 & 52.15 \\ 
  & & MCMC & \textbf{0.99} & \textbf{0.97} & \textbf{0.98} && 0.24 & \textbf{0.4} & \textbf{0.39 } &  & 7.7 & \textbf{22.89} & \textbf{31.76}\\ 
   \hline
\end{tabular}
\caption{Mean coverage, median deviance, and median MSE of estimated $N_e(t)$ for each trajectory and each method. The method with the best performance per $N,\alpha$ and trajectory combination is shown in boldface.}
\label{tab:median_deviance_mse_Ne}
\end{table} 

MCMC is the top performing method according to deviance for estimating $\alpha$ for over 40\% of the simulations, with block size MLE and hybrid estimation have similar performances (see Table~\ref{tab:best_method_alpha} in Appendix~\ref{appendix:b}). Table~\ref{tab:median_deviance_mse_Ne} shows the summary of performance statistics for estimating $N_{e}(t)$ (with the different sampling times combined). The value in boldface is the best performing (largest coverage, smallest deviance, and smallest MSE) out of the three methods. In terms of coverage, the MCMC method has the best performance, however all methods in general have good coverage. In terms of median deviance and mean MSE, the MCMC method is highly superior for the exponential and boom-and-boost trajectories, however we see the block-size MLE performs the best for the uniform trajectory, with the hybrid method not far behind. One possible explanation is a lack of identifiability in the case of constant $N_e$: scaling $\alpha$ by $c$ and $N_e$ be $1/c$ would result in the same likelihood. \\

With regards to the computational cost of the methods, one run of the MCMC method (for 20,000 iterations) on a multifurcating tree with 100 tips takes approximately 15 minutes on one CPU node, while the BS MLE method takes less than 10 seconds, and the hybrid method takes 3 minutes. Computational complexity of all methods is roughly linear in $n$, the number of tips.

\newpage 
\section{Applications} \label{sec:applications}
We apply our methods to two infectious disease examples and one animal population, where multifurcating viral phylogenies may occur due to superspreading events or skewed reproduction. All data used here can be downloaded directly from the open source dashboard Nextstrain \citep{hadfield2018nextstrain}, or from the GenBank database \citep{benson2012genbank}. We generated point estimates of the genealogies ignoring uncertainty and the quality of the estimation procedure. Our methods are not robust against misestimations of the genealogy (see Appendix~\ref{appendix:reconstruction}), and incorporation of genealogical uncertainty is still an open problem.

\subsection{Respiratory Syncytial Virus (RSV)} 
RSV is a respiratory illness transmitted via droplets that mostly affects children and immunocompromised individuals. Globally in 2019, there is an estimated 33 million RSV-associated acute lower respiratory infection episodes (ARTI) in children under the age of five \citep{li2022global}. These ARTI episodes mostly result in hospitalizations and health-care burdens, and current surveillance efforts are not well-developed, especially in lower income or lower-middle income countries such as those in Southeast Asia \citep{divarathne2019impact, pebody2020approaches}. Research also suggests that RSV may be seasonal, with more than 75\% of cases within five months, mainly occurring during the rainy season in tropical countries \citep{weber1998respiratory, thongpan2020respiratory}. In terms of the virology, RSV is a negative-sense, single-stranded RNA virus with 10 genes that encode 11 proteins \citep{griffiths2017respiratory}. In particular, two major surface proteins (F and G glycoproteins) control viral attachment and are the primary antibody targets, with the G protein being highly variable. There are also two antigenic subtypes based on the reaction of F and G proteins to monoclonal antibodies: RSV-A and RSV-B. Research shows that RSV-A is more prevalent and virulent than RSV-B, making it more concerning in terms of public health \citep{falsey2000respiratory, jha2016respiratory}. \\

Here, we consider the genealogy shown in Figure~\ref{fig:rsv_asia_ex}(a) estimated from 266 RSV-A molecular sequences\footnote{The data can be downloaded at \url{https://nextstrain.org/rsv/a/G?f_country=Bangladesh,Cambodia,India,Laos,Myanmar,Nepal,Thailand,Vietnam} sampled in South and Southeast Asia (mainly in India and Thailand) from 1991 to 2023. Sequences were aligned with MAFFT \citep{katoh2002mafft} and the genealogy was reconstructed with IQ-Tree \citep{nguyen2015iq}. Finally, we used TreeTime \citep{sagulenko2018treetime} for temporal inference and branches of length zero were collapsed.} We accessed the webpage on April 9, 2024. The block-size MLE is $\hat{\alpha}^{BS} = 1.718$ and the hybrid estimate is $\hat{\alpha}^H = 1.734$, and the MCMC posterior median and mean are 1.698 and 1.696. The inferred effective population size trajectories are shown in Figure~\ref{fig:rsv_asia_ex}(b). We see a rather steady increase until around 2010, with some oscillations indicative of the seasonality of RSV. The credible intervals obtained with MCMC are overall narrower than those obtained with the other two methods. \\ 
 
\begin{figure}[h]
\setcounter{subfigure}{0}
    \centering
    \sidesubfloat[]{\includegraphics[width=0.46\linewidth]{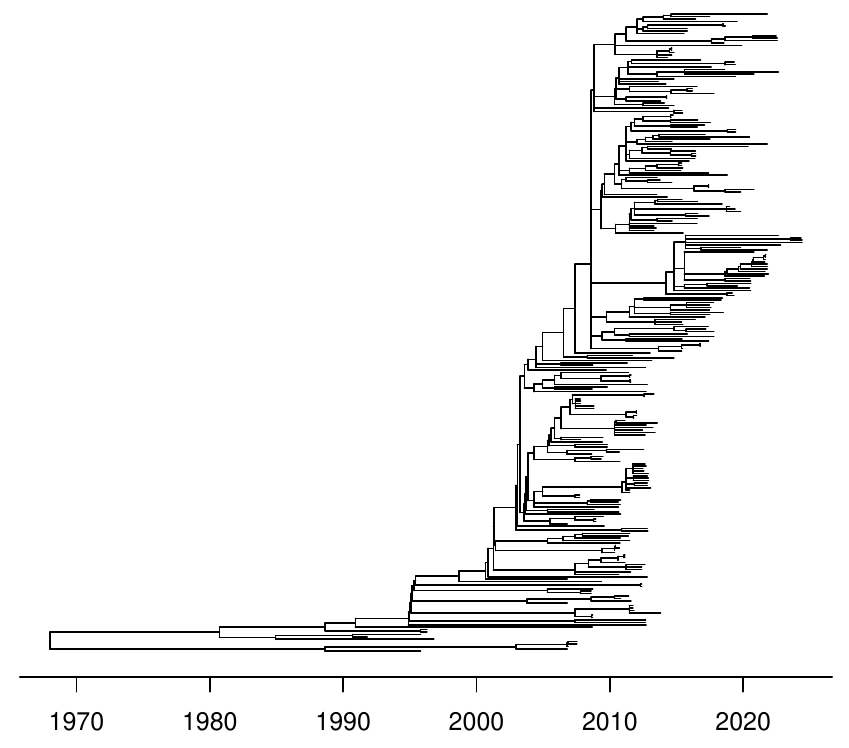}}
    \sidesubfloat[]{\includegraphics[width=0.48\linewidth]{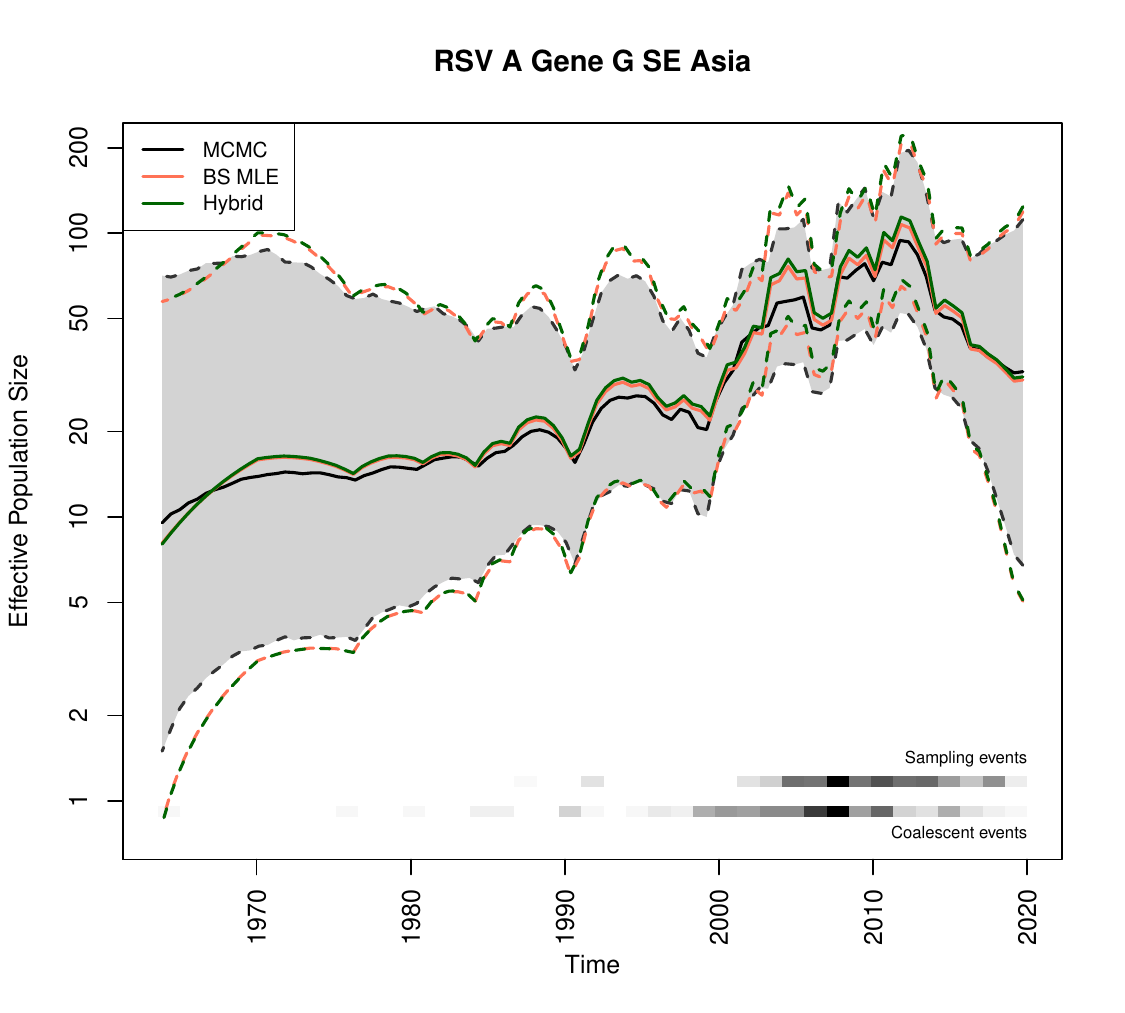}}
    \caption{\textbf{(a):} A dated genealogy generated using 266 RSV-A sequences from Nextstrain. \textbf{(b):} The reconstructed effective population size trajectories for RSV-A sequences in South and Southeast Asia. Note the $y$-axis is plotted in log-scale. Solid lines are the median trajectory, and dotted lines are the 95\% credible band.}
    \label{fig:rsv_asia_ex}
\end{figure}

\subsection{Enterovirus D68}
First discovered in 1962, Enterovirus D68 is a non-polio enterovirus that causes respiratory illnesses. It is spread person-to-person via respiratory secretions and mainly affects children. Enterovirus D68 has been found to be associated with acute flaccid myelitis (AFM), a muscle weakness condition which can lead to paralysis, but not caused by polio \citep{dyda2018association, moline2019notes}. There is no treatment or vaccine and so large outbreaks are important public health concerns. Prior to 2014, there were only 699 confirmed cases, but since then, outbreaks in the US, Canada, Europe, and Asia have regularly occurred. Over 2000 cases were confirmed worldwide in 2014 alone \citep{holm2016global}. Since diagnoses can only be confirmed via laboratory testing, these case counts are likely to be underestimated. Cases are also seasonal, occurring mainly from August to October, with the US having had outbreaks on a two-year cycle in 2014, 2016, and 2018 \citep{centers2011clusters, mckay2018increase}. \\

Post-COVID re-opening saw an outbreak of 139 cases in eight European countries between July and October 2021 \citep{benschop2021re}. To investigate the spread of Enterovirus D68 in Europe prior to COVID, we analyze a genealogy (Figure~\ref{fig:ent_europe_ex}(a)) inferred from 171 European sequences\footnote{The data used can be downloaded at \url{https://nextstrain.org/enterovirus/d68/genome?f_region=Europe}. The analysis was performed in the same manner as the RSV example and we accessed the webpage on April 9, 2024.} sampled between 1998 and 2019. Most samples are of children under the age of 5 and belong to clade B3, which was the dominant strain of the 2016 enterovirus D68 outbreak in the US \citep{wang2017enterovirus}. The block-size MLE is $\hat{\alpha}^{BS} = 1.741$, the hybrid estimate is $\hat{\alpha}^H = 1.764$. The MCMC posterior median and mean from MCMC are 1.732 and 1.728. The inferred effective population size trajectories are depicted in Figure~\ref{fig:ent_europe_ex}(b).  We see an oscillatory pattern, with a peak prior to 1998, then rather constant until a decrease in 2008 and then peak around 2012. Then we see peaks around 2015 and 2019 which is consistent with enterovirus outbreaks observed clinically, if we account for delays in sequencing. Again, the block-size MLE method and hybrid method give slightly higher estimates than MCMC. The credible intervals for MCMC are slightly narrower than the other two methods, but all three give similar results. \\

\begin{figure}[h]
\setcounter{subfigure}{0}
    \centering
    \sidesubfloat[]{\includegraphics[width=0.46\linewidth]{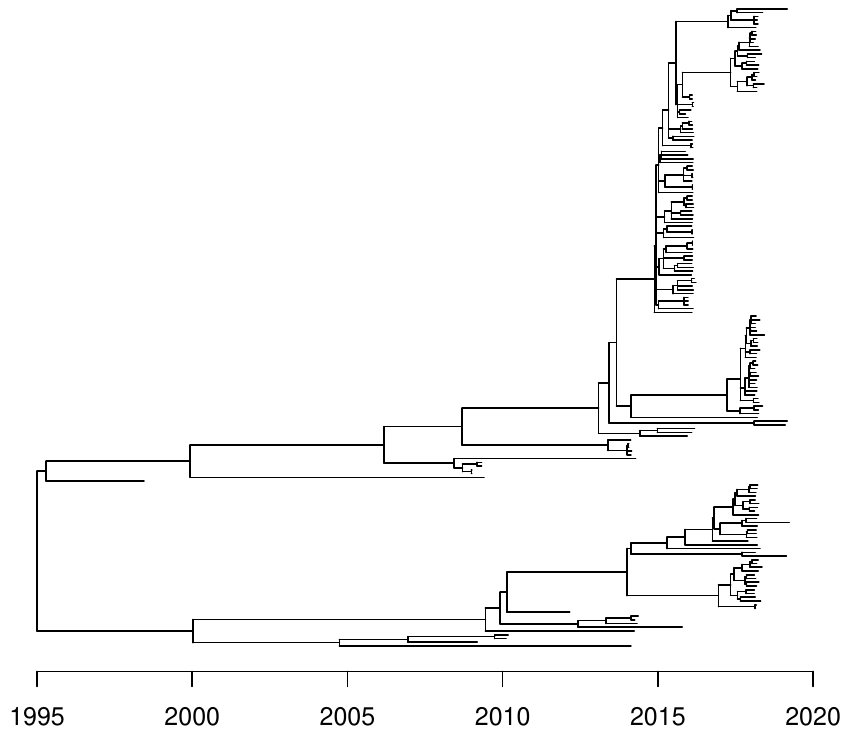}}
    \sidesubfloat[]{\includegraphics[width=0.48\linewidth]{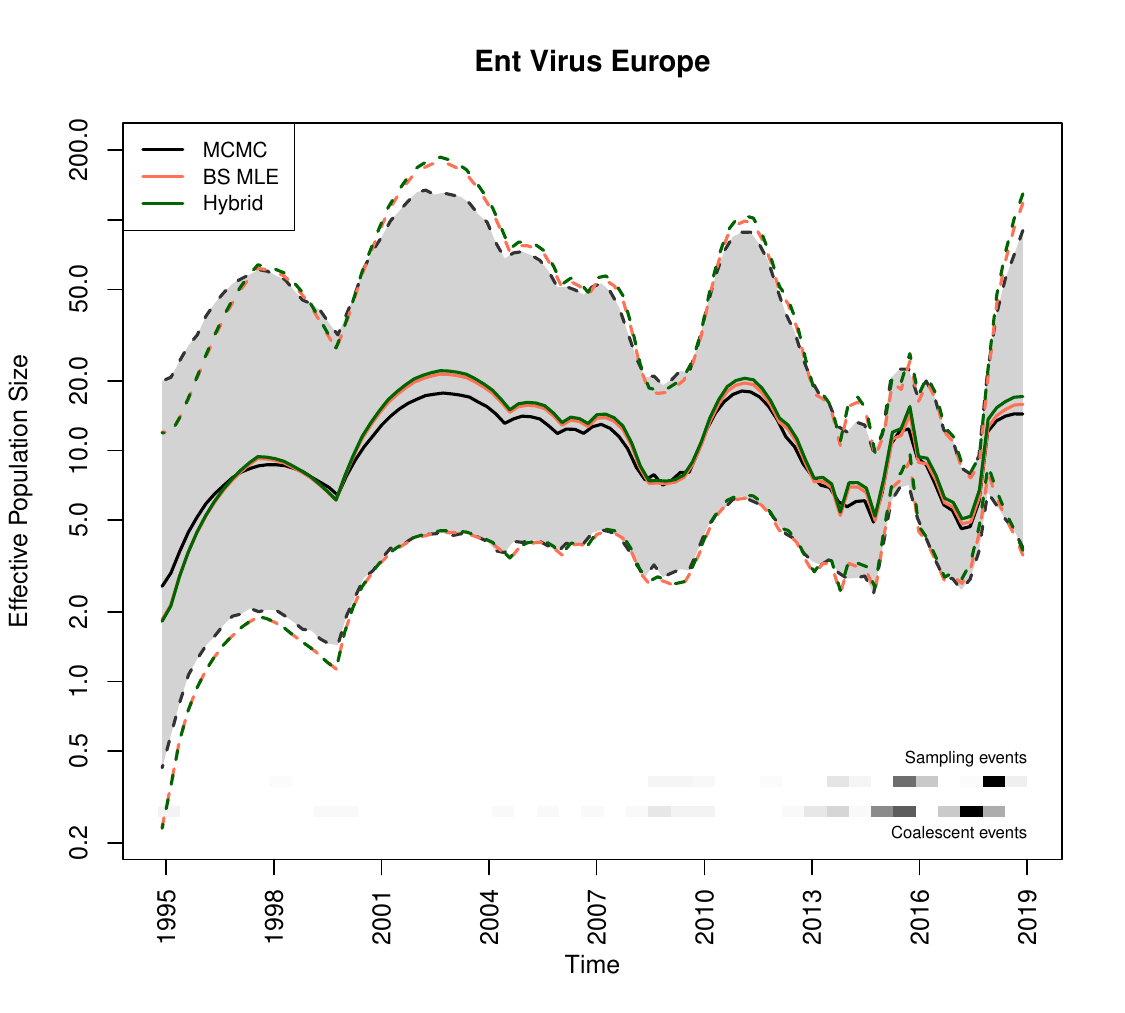}}
    \caption{\textbf{(a):} A dated phylogeny generated using 171 enterovirus D68 sequences from Nextstrain. \textbf{(b):} The reconstructed effective population size trajectories for Enterovirus D68 sequences in Europe. Note the $y$-axis is plotted in log-scale. Solid lines are the median trajectory, and dotted lines are the 95\% credible band.}
    \label{fig:ent_europe_ex}
\end{figure}

\subsection{Japanese Sardine populations}
Three hypotheses have been postulated as possible explanations for an excess of low-frequency mutations in many marine species, such as sardines. The first is that populations experienced a population bottleneck followed by a period of expansion during and after the Last Glacial Maximum. The second hypothesis states that species experienced natural selection and selective sweeps. The third is the reproductive skew hypothesis: there exists high variation in the number of offspring produced by many individuals which is then reflected in the genetic diversity. \cite{niwa2016reproductive} analyzed 106 mitochondrial DNA sequences of Japanese sardine (\textit{Sardinops melanostictus}) collected in 1990, where they conclude the data strongly supports the presence of multiple mergers. They assumed the Beta$(2-\alpha, \alpha)-$coalescent model as the underlying MMC model and used an importance sampling scheme with the infinite-sites mutation model to reject the hypothesis of $\alpha=2$ \citep{birkner2011importance}, with $\alpha$ estimated to be approximately 1.3 by maximum likelihood. Further results in their analysis reject the demographic expansion hypothesis using site-frequency spectrum and pairwise nucleotide differences. Understanding the presence of reproductive skew is important in order to better estimate the current and historical effective population size of marine species. \\

We accessed the same mtDNA sequences from GenBank (accession nos. LC031518–LC031673), and reanalyzed the reproductive skew hypothesis with our methods. Out of the 156 total sequences, 106 were collected in 1990 and used in both of the previous papers for analysis, while the other 50 sequences, collected between 2010-2012 by the same authors were not included in the original study. We first used MAFFT to align the sequences of about 1200bp of length, then used IQ-Tree Version 1.6.12 and TreeTime to infer dated, multifurcating trees via maximum likelihood \citep{katoh2002mafft, nguyen2015iq, sagulenko2018treetime}. Again, branches of length zero were collapsed. We fix the mutation rate in TreeTime to be $2\times 10^{-7}$ per site per year. The related Indian oil sardine, \textit{Sardinella longiceps}, estimated a mutation rate of $1\times 10^{-7}$ per site per year with a strict molecular clock \citep{sukumaran2016population}. Figure~\ref{fig:sardines_tree}(a) and (b) show the inferred trees using 156 sequences and the sub-tree containing only sequences from 1990 respectively. The dated root is inferred to be around 29000 years ago, which roughly coincides around the time of the last glacial maximum. \\

\begin{figure}[h]
    \centering
    \includegraphics[width=0.95\linewidth]{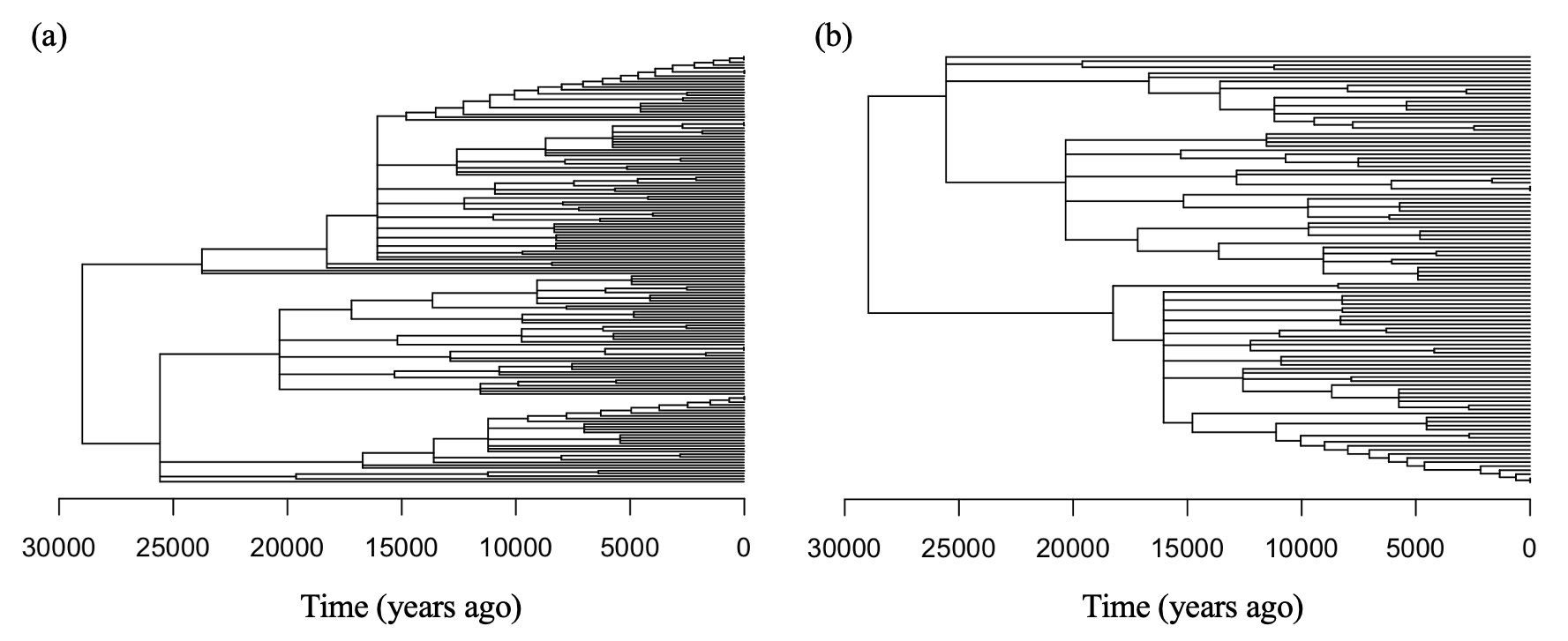}
    \caption{\textbf{(a):} A dated phylogeny generated using all \textit{Sardinops melanostictus} sequences. There are 156 tips and 95 internal nodes. \textbf{(b):} The subtree constructed of only sequences sampled in 1990: 106 tips and 61 internal nodes.}
    \label{fig:sardines_tree}
\end{figure}

We first estimate model parameters from the subtree with only sequences from 1990. The block-size MLE is $\hat{\alpha}^{BS} = 1.3498$, the hybrid estimate is $\hat{\alpha}^H = 1.3855$, and the MCMC posterior median and mean are 1.319 and 1.317 respectively. These inferred $\alpha$ values are largely consistent with those reported in \cite{niwa2016reproductive} that used an alternative method. The inferred effective population size trajectories depicted in Figure~\ref{fig:sardines_both_bnpr}(a) show a slow increase in $N_{e}(t)$ over time, however a constant trajectory falls within the 95\% credible intervals of the MCMC posterior. The MCMC estimates are less smooth than the block-size MLE and the hybrid estimates. Analyzing all sequences, we obtain the block-size MLE $\hat{\alpha}^{BS} = 1.4119$, the hybrid estimate is $\hat{\alpha}^H = 1.4124$, and the MCMC posterior median and mean of 1.3686 and 1.3659 respectively. Adding in extra data has reduced the variability in the MCMC method, and also increased the inferred $\alpha$ values. The inferred effective population size trajectories are shown in Figure~\ref{fig:sardines_both_bnpr}(b). Note the block-size MLE trajectory is covered by the hybrid estimate one because their $\alpha$ values are very close, but they are not identical. The inferred trajectories with all sequences are very similar to those inferred from only the sequences collected in 1990. \\

\begin{figure}[h]
\setcounter{subfigure}{0}
    \centering
    \sidesubfloat[]{\includegraphics[width=0.47\linewidth]{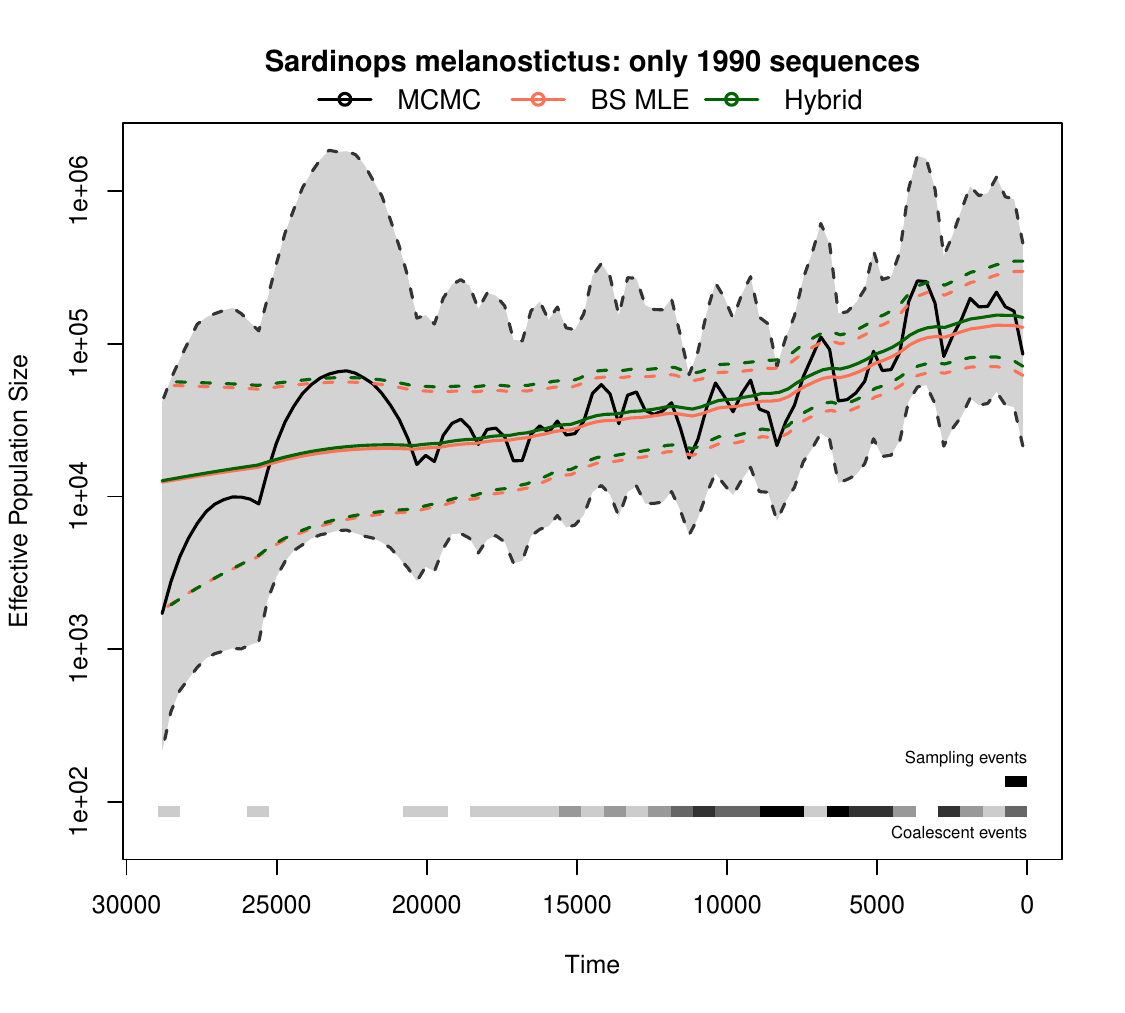}}
    \sidesubfloat[]{\includegraphics[width=0.47\linewidth]{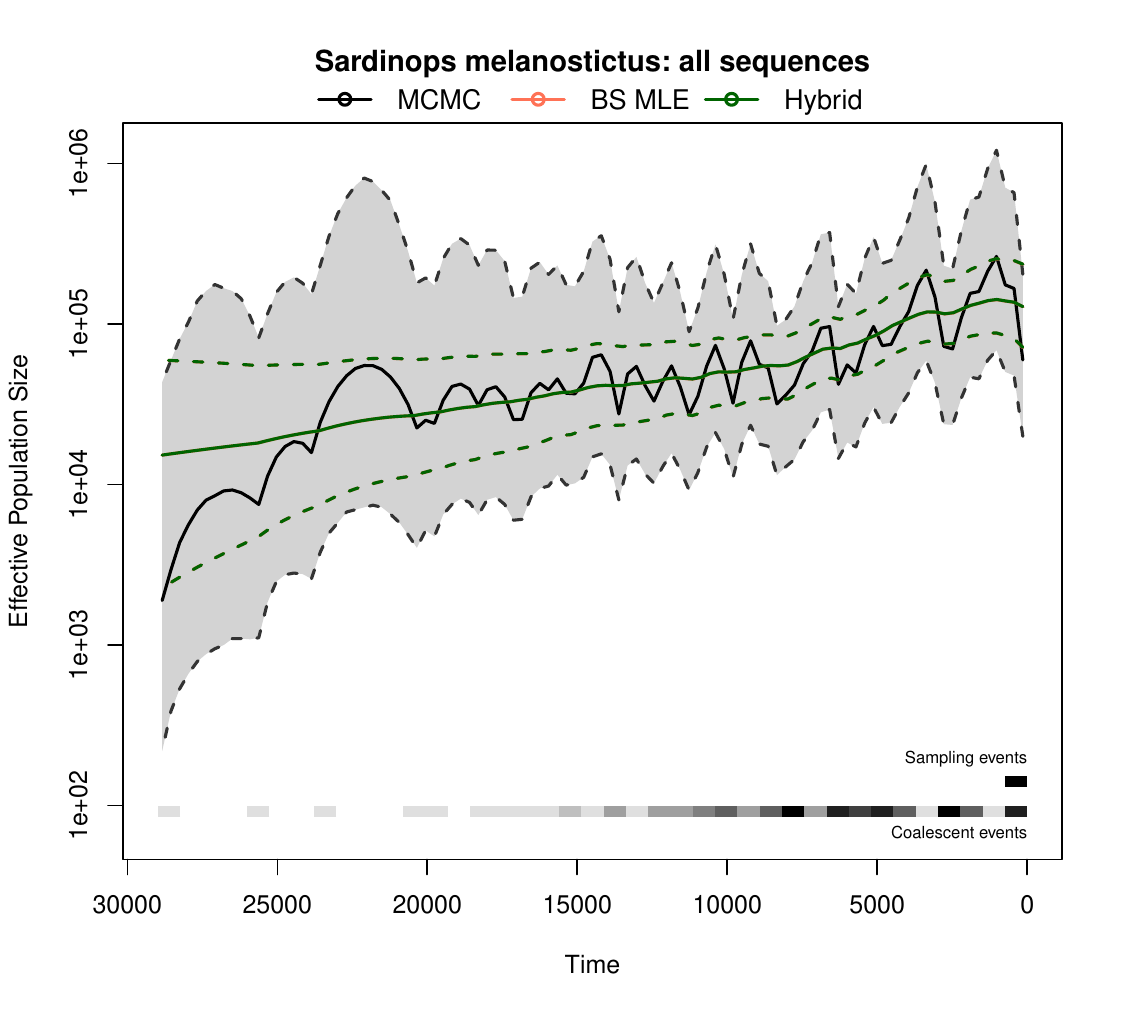}}
    \caption{The reconstructed effective population size trajectories for (a): the 106 sequences sampled in 1990 and (b): all \textit{Sardinops melanostictus} sequences. Note the $y$-axis is plotted in log-scale. Solid lines are the median trajectory, and dotted lines are the 95\% credible band.}
    \label{fig:sardines_both_bnpr}
\end{figure}

\cite{matuszewski2018coalescent} used site frequency spectrum (SFS) based maximum likelihood methods under the $\psi-$coalescent and exponential population growth to jointly investigate reproductive skew and population expansion hypotheses.  The $\psi-$coalescent is a $\Lambda$-coalescent with measure $\Lambda= \delta_\psi, \psi \in [0,1]$, and is the limiting distribution of an extended, discrete-time Moran model \citep{eldon2006coalescent}. The authors estimate $\hat{\psi}=0.46$, indicating reproductive skew, but no exponential population growth. Overall, our results agree with this study, except that out detailed inference of $N_{e}(t)$ shows a slow increase over time. \\ 

\begin{figure}[h]
\setcounter{subfigure}{0}   
    \centering
    \sidesubfloat[]{\includegraphics[width=0.4\linewidth]{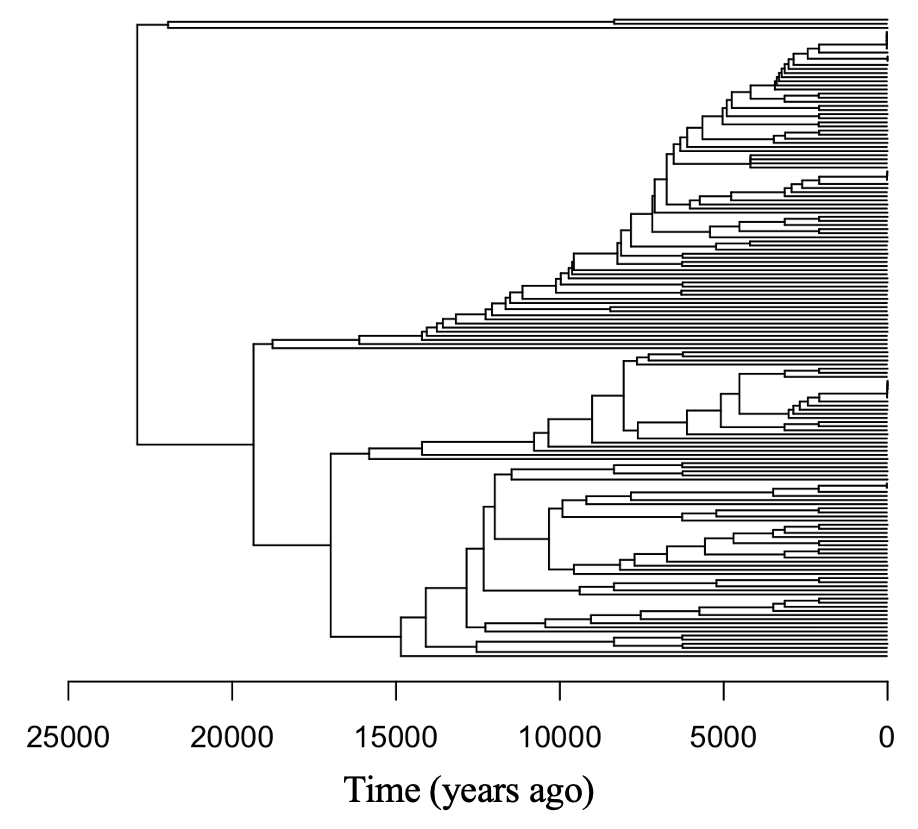}}
    \sidesubfloat[]{\includegraphics[width=0.52\linewidth]{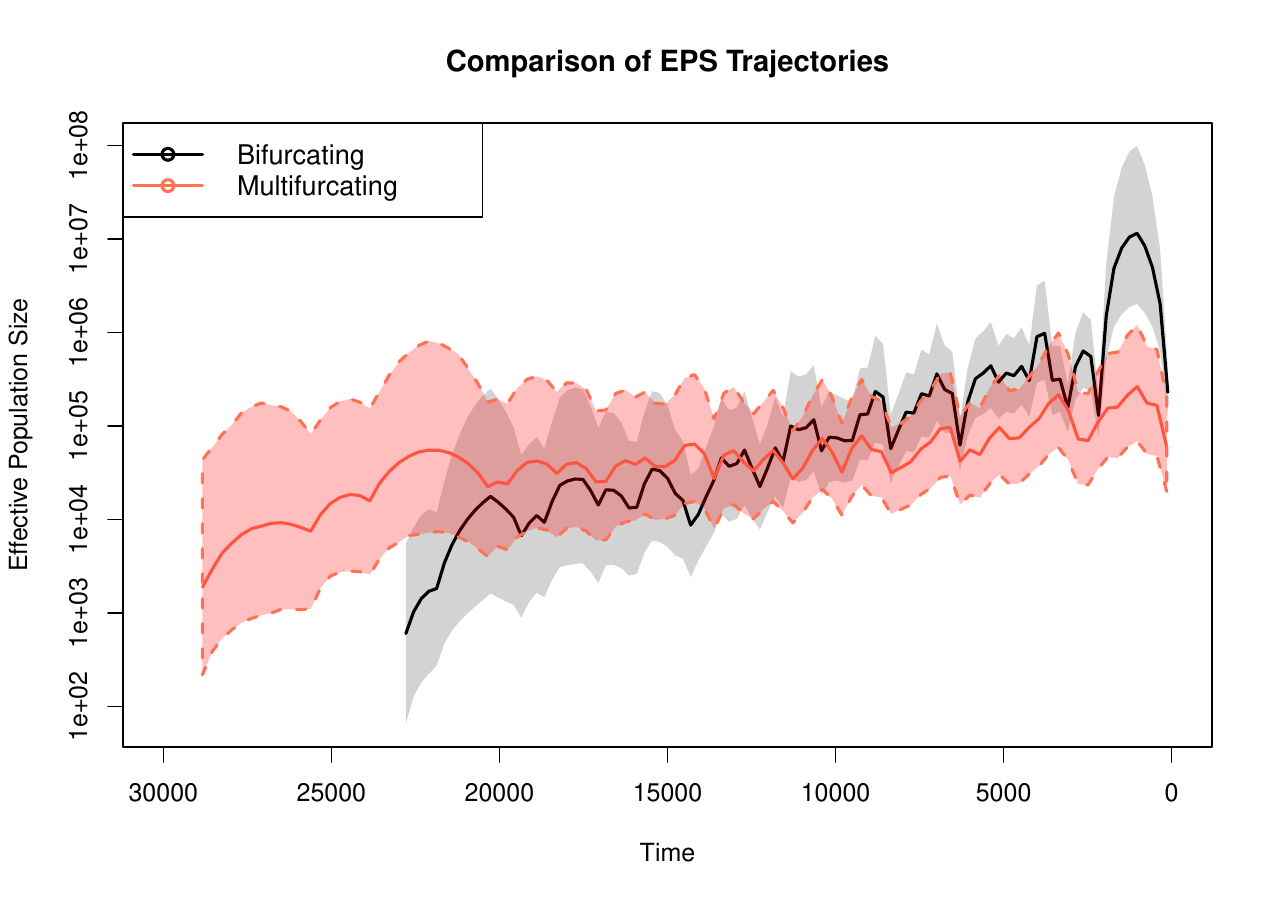}}
    \caption{\textbf{(a):} The serially sampled UPGMA binary genealogy of all Japanese sardine sequences. \textbf{(b):} Comparison of the reconstructed $N_e(t)$ using BNPR for the binary UPGMA and the MCMC $N_e(t)$ for the multifurcating tree in Figure~\ref{fig:sardines_tree}(a). Note that the $T_{MRCA}$ for the multifurcating tree in Figure~\ref{fig:sardines_tree}(a) is around 29,000 years versus 23,000 years for the binary genealogy shown in (a), which is why the inferred $N_e(t)$ profiles have different ranges for $t$.}
    \label{fig:sardines_upgma_ver}
\end{figure}

Finally, we compare the inferred $N_{e}(t)$ trajectories under a $\Lambda$-coalescent model versus the Kingman's variable $N_{e}(t)$ model. We use the methodology from \cite{drummond2000reconstructing} to reconstruct the UPGMA binary genealogy of serial samples under a fixed molecular clock with Poisson mutations. The implementation can be found in the \texttt{phylodyn} package. 
We used the BNPR function in \texttt{phylodyn} to estimate $N_{e}(t)$ using the INLA approximation \citep{pal12}. Figure~\ref{fig:sardines_upgma_ver} shows the inferred binary tree with a root date of around 23,000 years ago and the comparison of estimated $N_{e}(t)$ trajectories. The growth of $N_e(t)$ is much steeper under the bifurcating model than the multifurcating model, which suggests the inadequacy of the bifurcating model given previous work. 

\newpage 
\section{Discussion} \label{sec:discussion}

In this manuscript we propose three Bayesian nonparametric methods to jointly estimate the effective population size and $\alpha$ parameter from a multifurcating tree under the Beta$(2-\alpha, \alpha)-$coalescent model. Two of the proposed approaches iterate estimation between  $\alpha$ and $N_{e}(t)$, and the third approach approximates the posterior distribution of parameters via MCMC. The first two methods can be easily extended to any $\Lambda$-measure by computing the rates \[ \lambda_{b,k} = \int_0^1 x^{k-2} (1-x)^{b-k} \Lambda(dx)\] for each $b\geq k \geq 2$. Our implementations in \verb|phylodyn| through the function \verb|BNPR_Lambda()| allow the user to supply any discrete or continuous probability measure $\Lambda(dx)$ on $[0,1]$. As an example, in Figure~\ref{fig:Lambda_coal_BNPR_example} we show estimation of $N_{e}(t)$ of human Influenza A in New York\footnote{Genealogy is estimated from sequences analyzed in \cite{rambaut2008genomic} and available in \texttt{phylodyn}.} under four different models: Kingman's coalescent, Beta-coalescent with $\alpha=1.5$, a discrete probability mass function taking values $0.2, 0.7,0.9,1$ with equal probability, and a truncated standard normal distribution.
We choose to focus on the Beta-coalescent due to its higher applicability and nice theoretical properties. \\

\begin{figure}[h]
    \centering
    \includegraphics[width=0.9\linewidth]{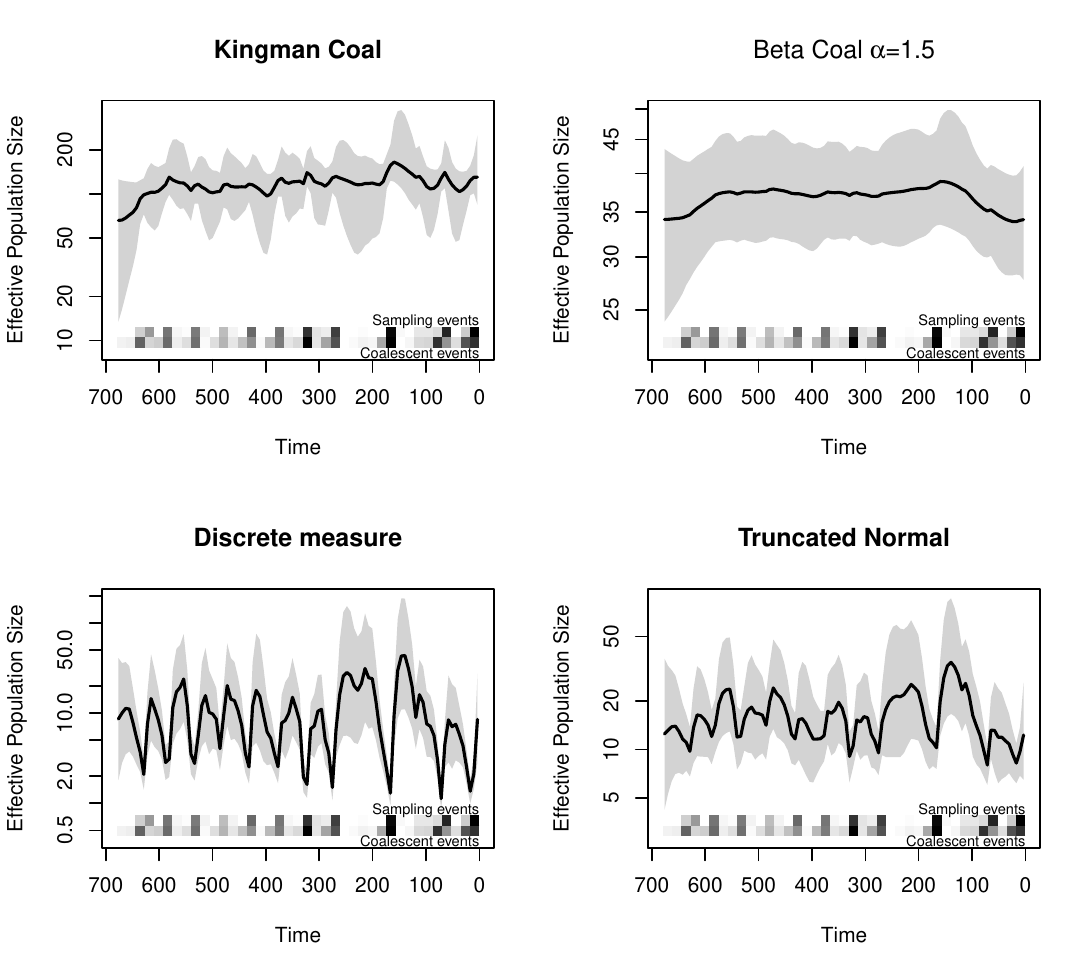}
    \caption{An example of inference of effective population size of a phylogenetic tree using BNPR under various coalescent models.}
    \label{fig:Lambda_coal_BNPR_example}
\end{figure}

This work expands on previous phylodynamic reconstruction methods developed for binary genealogies and implemented in \texttt{phylodyn} \citep{karcher2017phylodyn}. We empirically demonstrate that MCMC inference performs the best on simulated data when the underlying $N_e(t)$ is not constant. We also show that although coalescent times and block sizes are the sufficient statistics for estimating $N_{e}(t)$ and $\alpha$, $\alpha$ can be estimated by maximum likelihood from block size information alone. Surprisingly, this method proves quite accurate. In contrast, we show that $\alpha$ cannot be estimated from coalescent times alone. Although testing is out of the scope of the present manuscript, we anticipate that a likelihood ratio test for $\alpha$ would work well in this setting. \\

Finally, our methods assume that a genealogy is available without error, or rather is presented to the practitioner. We include a small simulation example in Appendix~\ref{appendix:reconstruction} comparing parameter estimates from the true simulated genealogy versus reconstructed genealogies from simulated molecular sequences. As expected, there is larger uncertainty when genealogies are estimated, in particular in the estimation of $N_e(t)$. A natural future direction is to incorporate our approach into a full Bayesian approach that targets the posterior distribution of $N_{e}(t)$, $\alpha$, and $\mathbf{g}$ from molecular sequences directly. We anticipate this to be a challenging problem as the state space of multifurcating trees is much larger than the space of binary trees.

\newpage 
\bibliographystyle{abbrvnat}
\bibliography{multifurcating}

\newpage 

\section{Appendix}

\subsection{Upper Bound on the Total Coalescent Rate}\label{appendix:c}

\begin{proposition}\label{prop:upperbound}
    The total coalescent rate $\lambda_b$ of the Beta-coalescent is upper-bounded by 
    \begin{equation} \label{eq:total_coal_rate_ub}
 \lambda_b= \sum^{b}_{k=2}\binom{b}{k}\lambda_{b,k}\leq (b-1)\left(\frac{b}{2}\right)^{\alpha-1} 
\end{equation}
\end{proposition}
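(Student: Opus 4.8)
The plan is to study $\lambda_b$ as a function of the single parameter $\alpha$ and to compare it against the right-hand side $(b-1)(b/2)^{\alpha-1}$, which in logarithmic coordinates is the unique straight line through the two values of $\alpha$ at which $\lambda_b(\alpha)$ can be evaluated in closed form. Starting from Eq.~\ref{eq:lambda_merge_rates} with $\Lambda$ the Beta$(2-\alpha,\alpha)$ measure, I would first exchange the finite sum with the integral to obtain the compact representation
\begin{equation*}
\begin{split}
\lambda_b(\alpha)&=\int_0^1\frac{\mathbb{P}\big(\mathrm{Bin}(b,x)\geq 2\big)}{x^2}\,\Lambda(dx)\\
&=\frac{1}{B(2-\alpha,\alpha)}\int_0^1 x^{-1-\alpha}(1-x)^{\alpha-1}\big[1-(1-x)^b-bx(1-x)^{b-1}\big]\,dx,
\end{split}
\end{equation*}
which also matches the Poisson-point-process / coin-flipping description given after Eq.~\ref{eq:lambda_merge_rates}. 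This integral form is the computational backbone for everything that follows.

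The next step is to pin down the two endpoints. At $\alpha=1$ (the Bolthausen--Sznitman case) the explicit rates $\binom{b}{k}\lambda_{b,k}=\frac{b}{k(k-1)}$ telescope to $\lambda_b(1)=b-1$, and as $\alpha\to 2^-$ the measure concentrates at $0$ and $\lambda_b(\alpha)\to\binom{b}{2}$ (Kingman). Since $(b-1)(b/2)^{\alpha-1}$ equals $b-1$ at $\alpha=1$ and $\binom{b}{2}$ at $\alpha=2$, its logarithm $\log(b-1)+(\alpha-1)\log(b/2)$ is exactly the chord of $\log\lambda_b$ between these endpoints. The desired inequality is therefore equivalent to the assertion that $\log\lambda_b(\alpha)$ lies on one side of this chord, which would follow immediately from convexity (or concavity) of $\alpha\mapsto\log\lambda_b(\alpha)$.

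The heart of the argument, and the step I expect to be the main obstacle, is to sign the second derivative of $\log\lambda_b$. Writing $F(\alpha):=B(2-\alpha,\alpha)\,\lambda_b(\alpha)=\int_0^1 g(x)\,x\,e^{(\alpha-1)u(x)}\,dx$ with $g(x)=x^{-2}\mathbb{P}(\mathrm{Bin}(b,x)\geq2)\geq0$ and $u(x)=\log\frac{1-x}{x}$, the integral $F$ is of moment-generating-function type in $\alpha$ and is therefore log-convex, so $(\log F)''\geq0$; meanwhile $\log B(2-\alpha,\alpha)=\log\Gamma(2-\alpha)+\log\Gamma(\alpha)$ is also convex, with $(\log B)''=\psi'(2-\alpha)+\psi'(\alpha)$, where $\psi'$ is the trigamma function. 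Hence
\begin{equation*}
(\log\lambda_b)''(\alpha)=(\log F)''(\alpha)-\big[\psi'(2-\alpha)+\psi'(\alpha)\big],
\end{equation*}
a difference of two nonnegative quantities: $(\log F)''$ is the variance of $u$ under the $\alpha$-tilted measure, and the task is to compare this variance with the trigamma sum. Settling this sign both proves the bound via the chord comparison and determines its direction, and here care is needed, since the comparison with $(b-1)(b/2)^{\alpha-1}$ is sensitive to the range of $\alpha$ (it reverses across $\alpha=1$), so the interval of $\alpha$ must be stated accordingly. As a coarse sanity check, the union bound $\mathbb{P}(\mathrm{Bin}(b,x)\geq2)\leq\binom{b}{2}x^2$ already yields the weaker estimate $\lambda_b\leq\binom{b}{2}$, recovering the endpoint at $\alpha=2$ but not the sharper $\alpha$-dependent statement.
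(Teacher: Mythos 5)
Your argument, as written, is not yet a proof: everything before the last step is setup, and the decisive step --- determining the sign of $(\log\lambda_b)''(\alpha)$, i.e.\ comparing the variance of $u(x)=\log\frac{1-x}{x}$ under the $\alpha$-tilted measure with $\psi'(2-\alpha)+\psi'(\alpha)$ --- is explicitly left open. The chord comparison is exactly as strong as that convexity claim, so until the sign is settled you have proved nothing. That is a genuine gap, not a technicality.

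What makes the gap worth spelling out is that it cannot be closed in the direction the proposition asserts. The total rate has an exact closed form,
\begin{equation*}
\lambda_b(\alpha)=\prod_{j=1}^{b-2}\frac{\alpha+j}{j}=\frac{\Gamma(b-1+\alpha)}{\Gamma(1+\alpha)\,\Gamma(b-1)},
\end{equation*}
which follows by induction on $b$ from your own integral representation: the binomial-theorem computation gives $\lambda_{b+1}(\alpha)-\lambda_b(\alpha)=\Gamma(b-1+\alpha)/\big(\Gamma(\alpha)\,(b-1)!\big)$, with base case $\lambda_2=1$. (Sanity checks: $\lambda_3=\alpha+1$, $\lambda_b(1)=b-1$, $\lambda_b(2)=\binom{b}{2}$, and it reproduces the large-$b$ asymptotics $\lambda_b\sim b^{\alpha}/\Gamma(1+\alpha)$.) Hence $\log\lambda_b(\alpha)=\sum_{j=1}^{b-2}\big[\log(\alpha+j)-\log j\big]$ is strictly \emph{concave} in $\alpha$, not convex. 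By your own chord argument, concavity forces the reverse inequality on the interpolation interval: $\lambda_b(\alpha)\geq(b-1)(b/2)^{\alpha-1}$ for $1\leq\alpha\leq2$, with equality only at the endpoints. A concrete counterexample to the proposition is $b=3$, $\alpha=3/2$: there $\lambda_3=5/2$, while $(b-1)(b/2)^{\alpha-1}=\sqrt{6}\approx2.449$. So the stated upper bound is false precisely on $1<\alpha<2$; on $0<\alpha\leq1$ the evaluation point lies outside the chord interval, concavity flips the comparison, and the stated bound does hold there.

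This also clarifies the relation to the paper's own proof, which bounds each term $\binom{b}{k}\lambda_{b,k}$ via Wendel's inequality: that proof contains direction errors in exactly the regime $1<\alpha<2$ it claims to treat --- the replacement of $(k-2)^{1-\alpha}$ by $(k-1)^{1-\alpha}$ goes the wrong way because $t\mapsto t^{1-\alpha}$ is decreasing for $\alpha>1$, and the final step $\big(\tfrac{b}{k-1}-1\big)^{\alpha-1}\leq(b/2)^{\alpha-1}$ fails at $k=2$ for $b\geq3$ --- as it must, since a false statement has no correct proof. Your framework is in fact the more illuminating one: it identifies the right-hand side of Eq.~\ref{eq:total_coal_rate_ub} as the log-linear interpolant of $\lambda_b$ between the Bolthausen--Sznitman value $b-1$ and the Kingman value $\binom{b}{2}$, and reduces the question to a single convexity statement whose (negative) resolution shows that Proposition~\ref{prop:upperbound} should be read as an approximation --- or as a bound whose direction depends on whether $\alpha\lessgtr1$ --- which is how Figure~\ref{fig:log_coal_rate_w_approx} effectively treats it.
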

\begin{proof}
We will use Wendel's Inequality \citep{qi2013bounds}: 
\begin{equation} \label{eq:wendels}
    x^{1-s}\leq \frac{\Gamma(x+1)} {\Gamma(x+s)}\leq (x+s)^{1-s}, \;\; 0<s < 1, x > 0
\end{equation}
In addition, 
\[ \Gamma(2-\alpha)\Gamma(\alpha) = (1-\alpha) \Gamma(1-\alpha) \Gamma(\alpha) = \frac{(1-\alpha) \pi }{\sin ( (1-\alpha) \pi ) } \] which implies \[ \frac{1}{\Gamma(2-\alpha)\Gamma(\alpha)} =\frac{ \sin\big ( (1-\alpha) \pi \big ) }{(1-\alpha) \pi } \in (0,1]. \]
Applying the reciprocal of Wendel's Inequality twice, we have for $2 \leq k \leq b$, 
\begin{align*}
    \binom{b}{k}\lambda_{b,k}(\alpha) & =\frac{b!\Gamma(k-\alpha)\Gamma(\alpha+b-k)\Gamma(2)}{k!(b-k)!\Gamma(2-\alpha)\Gamma(\alpha)\Gamma(b)}\\
    & =b\left(\frac{\Gamma(k-2+2-\alpha)}{k\Gamma(k)}\right)\left(\frac{\Gamma(b-k+1+(\alpha-1))(b-k+1)}{\Gamma(b-k+2)}\right)\frac{1}{\Gamma(2-\alpha)\Gamma(\alpha)} \qquad \qquad (\ast) \\
    &\leq b \times \frac{(k-2)^{1-\alpha}}{k(k-1)} \times (b-k+1)^{\alpha-1}  \\
    &\leq b\times \frac{(k-1)^{1-\alpha}}{k(k-1)}\times \left(b-(k-1)\right)^{\alpha-1} \\
    &= b\times \frac{1}{k(k-1)}\times \left(\frac{b}{k-1}-1\right)^{\alpha-1} \\
    &\leq b\times \frac{1}{k(k-1)} \times \left(\frac{b}{2}\right)^{\alpha-1}
\end{align*}
In $(\ast)$, we assumed $1<\alpha<2$. A similar proof will work for $0<\alpha<1$. Since $\sum^{b}_{k=2}\frac{1}{k(k-1)}=\frac{b-1}{b}$, we obtain the desired upper bound on the total coalescent rate when there are $b$ lineages: 
\begin{equation} \label{eq:total_coal_rate_ub}
 \lambda_b= \sum^{b}_{k=2}\binom{b}{k}\lambda_{b,k}\leq (b-1)\left(\frac{b}{2}\right)^{\alpha-1} 
\end{equation}
\end{proof}

\noindent Figure~\ref{fig:log_coal_rate_w_approx} shows that for $\alpha \in [1,2)$, the upper bound is a very good approximation. There may be some numerical imprecisions in the calculation of $\lambda_b$.

\begin{figure}[!h]
    \centering
    \includegraphics[width=0.7\linewidth]{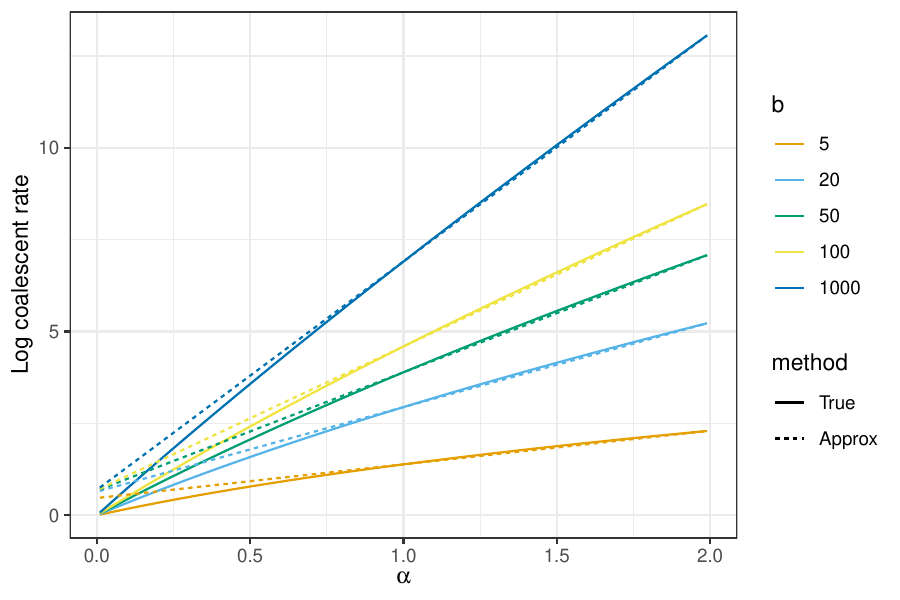}
    \caption{\textbf{Log total coalescent rate when there are $b$ lineages.} Solid lines show the true value for different values of $b$ and dashed lines show our approximation.} 
    \label{fig:log_coal_rate_w_approx}
\end{figure}

\subsection{Definition of Performance Measures}\label{appendix:a}
The performance measures for evaluating estimates of $N_e(t)$ are calculated as follows: 
\begin{align*}
    \text{Coverage} &= \frac{1}{D} \sum_{i=1}^D \mathds{1} \big ( \hat{N}^H_e(t_i) > N^T_e(t_i) > \hat{N}^L_e(t_i) \big)  \\
    \text{Bias} &= \frac{1}{D} \sum_{i=1}^D \frac{\hat{N}_e(t_i) - N^T_e(t_i)}{N^T_e(t_i)}  \\
    \text{Deviance} &= \frac{1}{D} \sum_{i=1}^D \frac{\mid \hat{N}_e(t_i) - N^T_e(t_i)\mid }{N^T_e(t_i)}  \\
    \text{MSE} &= \frac{1}{D} \sum_{i=1}^D \frac{(\hat{N}_e(t_i) - N^T_e(t_i))^2}{N^T_e(t_i)} 
\end{align*}
where $\hat{N}_e(t)$ denotes the estimated trajectory, $\hat{N}^H_e(t), \hat{N}^L_e(t)$ the upper and lower bounds of the credible regions respectively, and $N^T_e(t)$ the true trajectory, with grid points $\{t_i: i=1,..., D\}$.

\newpage 

\subsection{Results from Simulations}\label{appendix:b}
\begin{table}[H]
\centering
\begin{tabular}{|l|llll|}
  \hline
 & True $\alpha$ & BS MLE & Hybrid & MCMC \\ \hline
  Coverage & \textbf{100\%} & \textbf{100\%} & 95\% &\textbf{100\%} \\ 
  Bias & \textbf{0.003} & 0.366 & 1.027 & 0.217 \\ 
  Deviance &\textbf{0.266}& 0.49 & 1.094 & 0.349 \\ 
  MSE &  \textbf{6.491} &  40.908 & 315.304 &  12.783 \\ 
   \hline
\end{tabular}
\caption{Performance measures for estimating $N_e(t)$ of the tree in Figure~\ref{fig:sim_tree_ex}. Values were calculated by discretizing $N_e(t)$ at 100 grid points.}
\label{tab:sim_tree_ex_perf_meas}
\end{table}

\begin{figure}[H]
    \centering
    \includegraphics[width=0.6\linewidth]{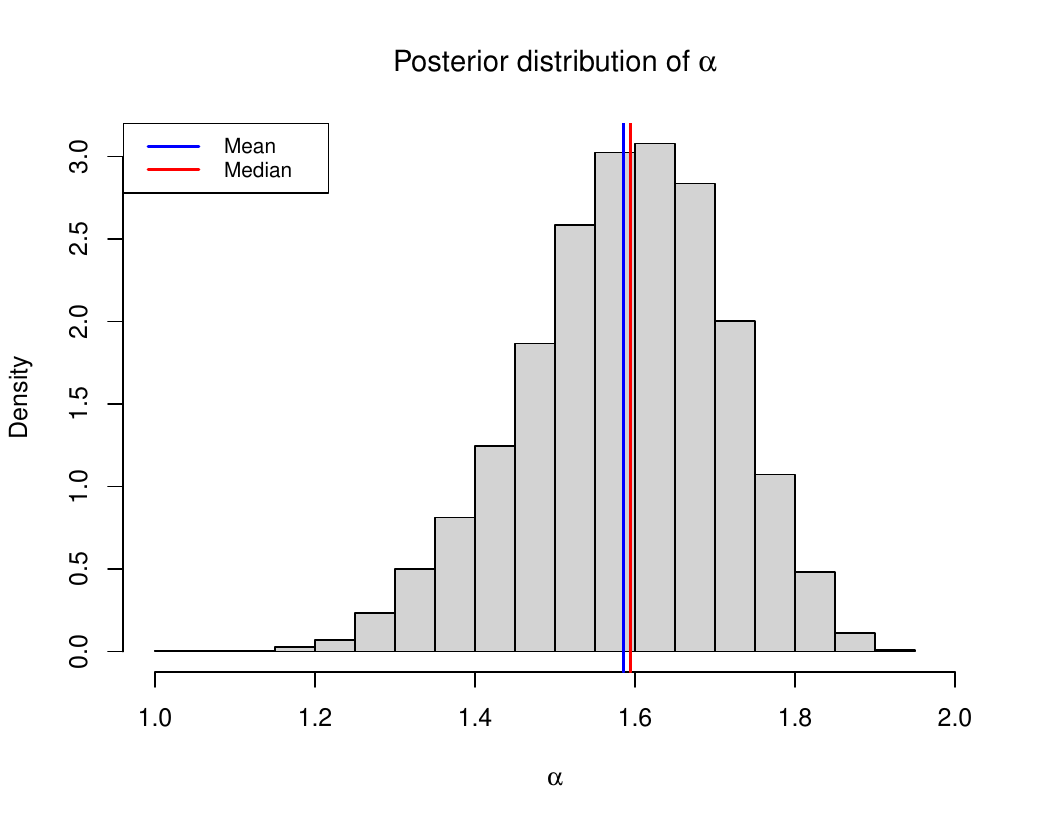}
    \caption{\textbf{Posterior distribution of $\alpha$ for the tree in Figure~\ref{fig:sim_tree_ex}}. The blue and red lines are the mean and median values respectively.} 
    \label{fig:one_tree_posterior_alpha_dist}
\end{figure}

\begin{figure}[H]
\setcounter{subfigure}{0}
    \centering
    \includegraphics[width=0.95\linewidth]{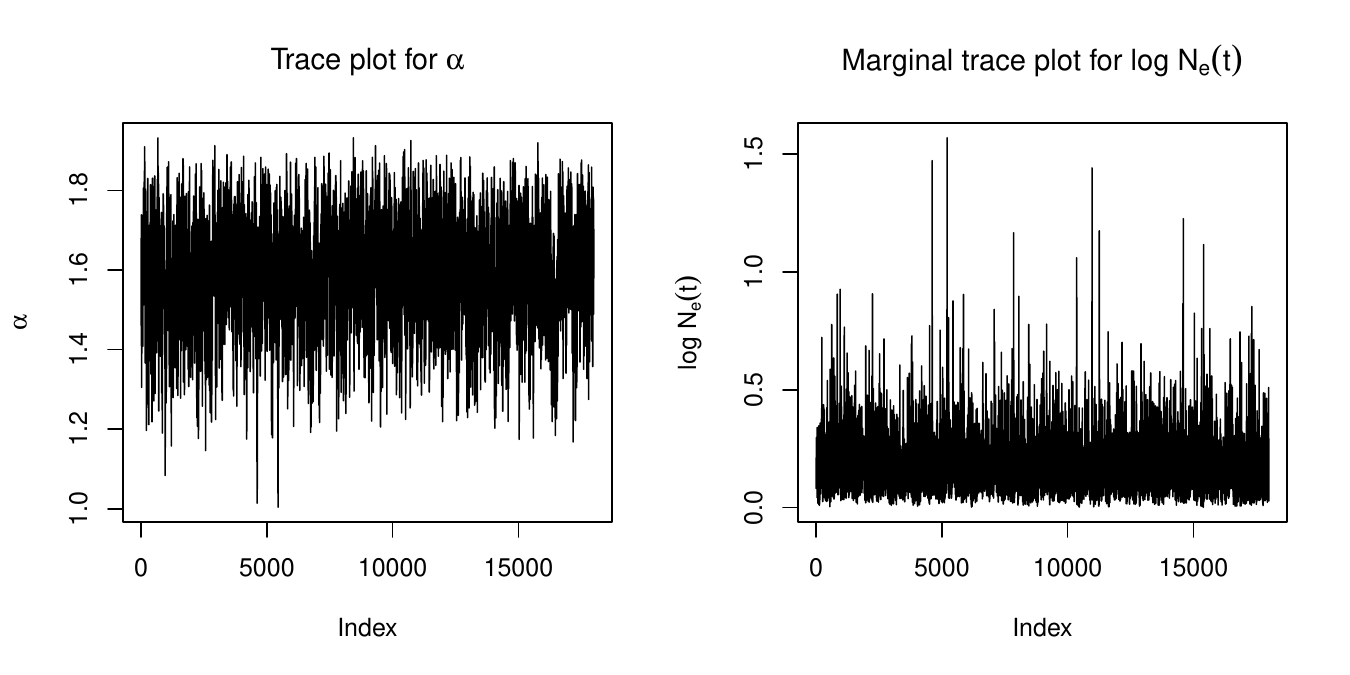}
    \caption{\textbf{Trace plots for the MCMC chain for the tree in Figure~\ref{fig:sim_tree_ex}.} The marginal trace plot for $\log N_e(t)$ is taken at $t\approx 5.23$, which is the 70th grid point.}
    \label{fig:one_tree_trace_plot}
\end{figure}

\begin{table}[H]
\centering
{\setlength{\tabcolsep}{10pt} 
\begin{tabular}{rrrrr}
  \hline
   & $n$ & Mean & Bias & MSE \\ 
  \hline
   \multirow{3}{*}{$\alpha = 0.2$ } & 50 & 0.19 & $-0.01$ & 0.02 \\ 
   & 100 & 0.18 & $-0.02$ & 0.01 \\ 
   & 500 & 0.18 & $-0.02$ & 0.01 \\ \hline 
   \multirow{3}{*}{$\alpha = 0.5$ } & 50 & 0.46 & $-0.04$ & 0.04 \\ 
   & 100 & 0.47 & $-0.03$ & 0.03 \\ 
   & 500 & 0.47 & $-0.03$ & 0.01 \\ \hline 
   \multirow{3}{*}{$\alpha = 1$ } & 50 & 0.97 & $-0.03$ & 0.09 \\ 
   & 100 & 0.98 & $-0.03$ & 0.03 \\ 
   & 500 & 0.99 & $-0.01$ & 0.01 \\ \hline 
   \multirow{3}{*}{$\alpha = 1.5$ } & 50 & 1.48 & $-0.02$ & 0.03 \\ 
   & 100 & 1.5 & $-0.008$ & 0.01 \\ 
   & 500 & 1.5 & $-0.003$ & 0.002 \\ \hline 
   \multirow{3}{*}{$\alpha = 1.8$ }& 50 & 1.8 & $-0.005$ & 0.01 \\ 
  & 100 & 1.8 & 0.001 & 0.001 \\ 
  & 500 & 1.8 & $-0.002$ & 0.001 \\ 
   \hline
\end{tabular} }
\caption{\textbf{Estimation of $\alpha$ from tree topology only.}  Mean $\hat{\alpha}^{BS}$, mean bias and MSE based on 1000 simulations for each combination of $\alpha$ and $n$.}
\label{tab:topology_only_mle}
\end{table}

\begin{table}[H]
\centering
\begin{tabular}{r|rrr}
  \hline
 Traj. & BS MLE & Hybrid & MCMC \\ 
  \hline
Unif & 26.6\% & 32.3\% &\textbf{ 41.1\%} \\ 
  Exp & 30.6\% & 28.5\% & \textbf{40.9\%}\\ 
  BB & 32.7\% & 26.6\% & \textbf{40.7\%} \\ 
   \hline
\end{tabular}
\caption{Percentage of simulations where each method exhibits the lowest deviance for estimating $\alpha$, where we simulated across different $n$ and $\alpha$ values, sampling schedules, and effective population size trajectories.}
\label{tab:best_method_alpha}
\end{table}

\begin{figure}[H]
    \centering
    \sidesubfloat[]{\includegraphics[width=0.9\linewidth]{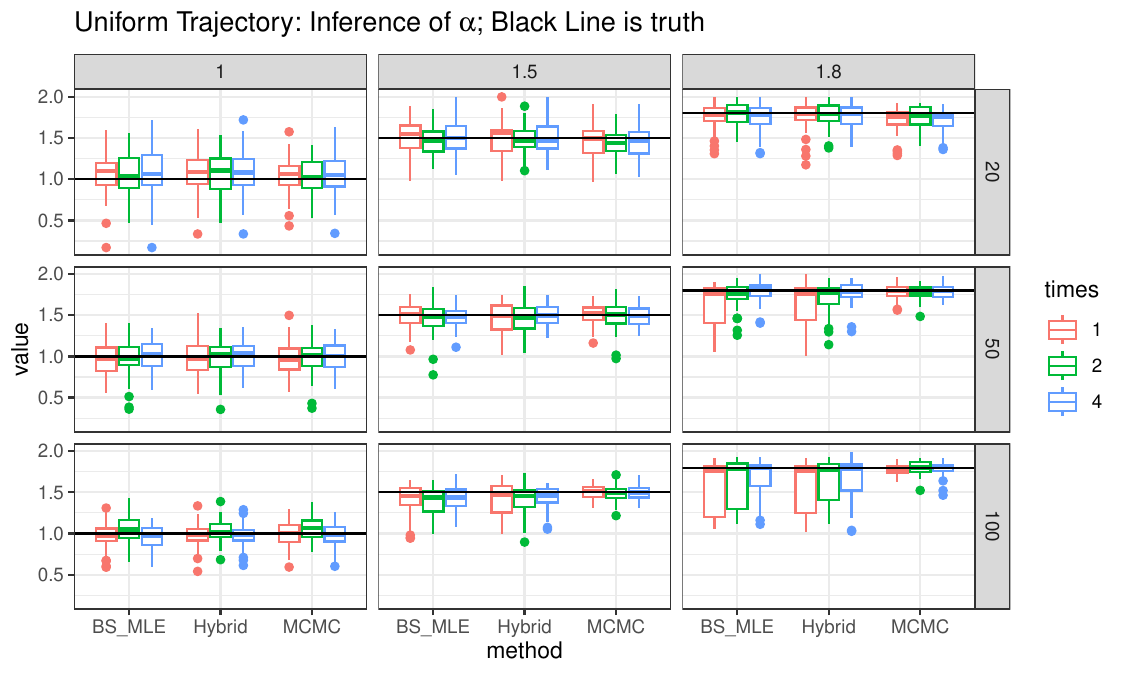}}
    \phantomcaption
\end{figure}

\begin{figure}[H] \ContinuedFloat
    \centering
    \sidesubfloat[]{\includegraphics[width=0.9\linewidth]{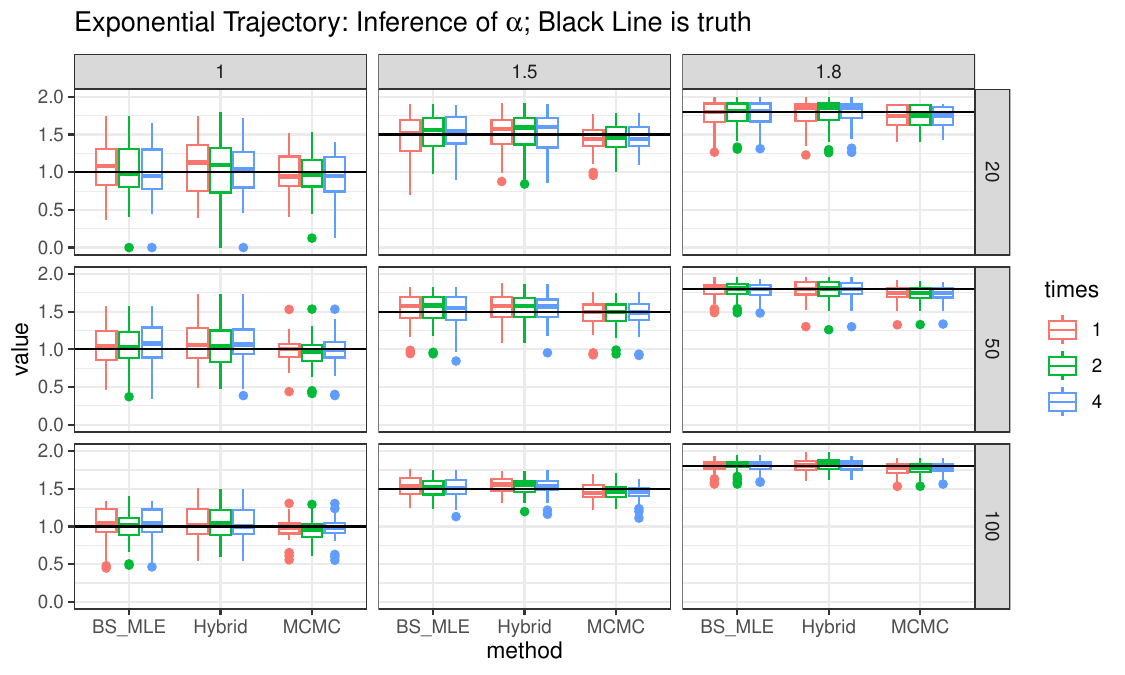}}
    \phantomcaption
\end{figure}

\begin{figure}[H] \ContinuedFloat
    \centering
    \sidesubfloat[]{\includegraphics[width=0.9\linewidth]{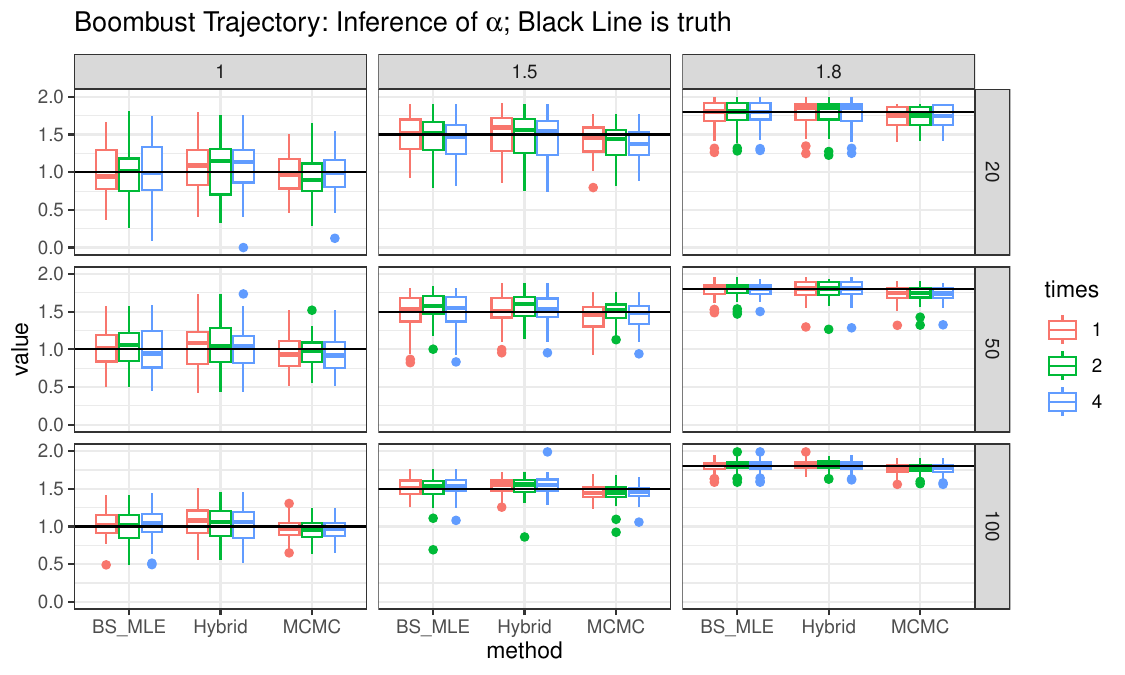}}    
    \caption{Boxplots of inferred $\alpha$ values from simulations with trees generated from (a): Uniform trajectory, (b): Exponential trajectory, (c): Boombust trajectory. Each row is a fixed $N$; each column is a fixed $\alpha$; and each colors is a sampling time. The black line is the true $\alpha$.}
    \label{fig:all_traj_alpha_boxplots}
\end{figure}

\subsection{Simulation study on reconstructed phylogenies}\label{appendix:reconstruction}

Our methods assume a multifurcating genealogy is given without error. In practice, genealogies are estimated with uncertainty from observed molecular data. To analyze the sensitivity of our estimates to genealogical estimation uncertainty, we generated three multifurcating trees with 50 tips under a Beta$(2-\alpha,\alpha)$ coalescent with $\alpha=1.5$ and an exponential effective population size trajectory. The tips were heterochronously sampled at $t=0,1,2$ with a $(40\%, 40\%, 20\%)$ split. For each of the trees shown in Figure~\ref{fig:reconstructed_original_trees}, we simulated 10 datasets of 50 sequences with 1000 base pairs using AliSim under the Jukes-Cantor mutation model \citep{ly2023alisim}. Then, we used IQ-Tree and TreeTime to reconstruct dated, multifurcating trees via maximum likelihood \citep{nguyen2015iq, sagulenko2018treetime} and applied our three methods to estimate $\alpha$ and $N_e(t)$ on these reconstructed genealogies. We also carried out the same analysis pipeline on datasets simulated with a 10 times faster mutation rate. Results are presented in Figure~\ref{fig:reconstructed_results} below. As expected, estimates of $\alpha$ from inferred genealogies are noisier than those obtained from the true genealogies, however inference of $N_e(t)$ is generally much worse when genealogies are inferred. We conclude that our methods, like any  method that assume a fixed estimated genealogy, are not robust to genealogical misestimations. A joint method for estimating the genealogy and model parameters is still an open problem. \\ 

\begin{figure}[H]
    \centering
    \includegraphics[width=0.95\linewidth]{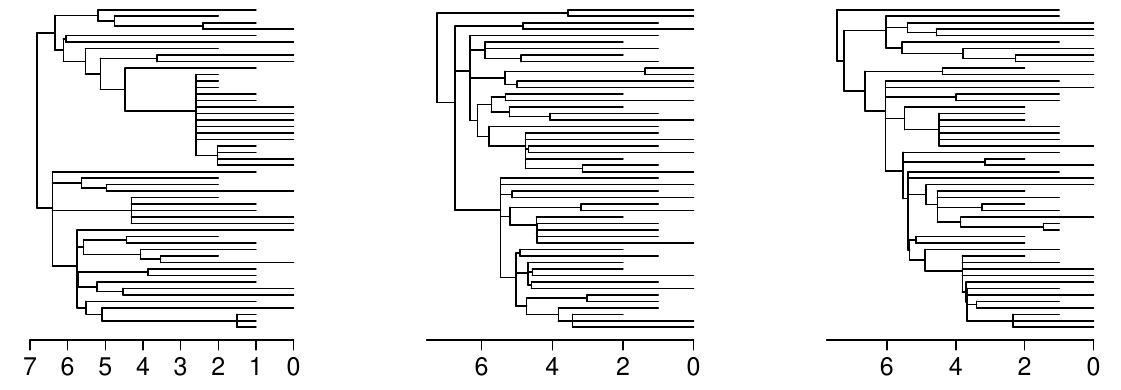}
    \caption{The three ``true'' genealogies used for simulation. Each was generated under a Beta-coalescent with $\alpha=1.5$ and an exponential effective population size trajectory.}
    \label{fig:reconstructed_original_trees}
\end{figure}

\begin{figure}[H]
    \centering
    \includegraphics[width=0.95\linewidth]{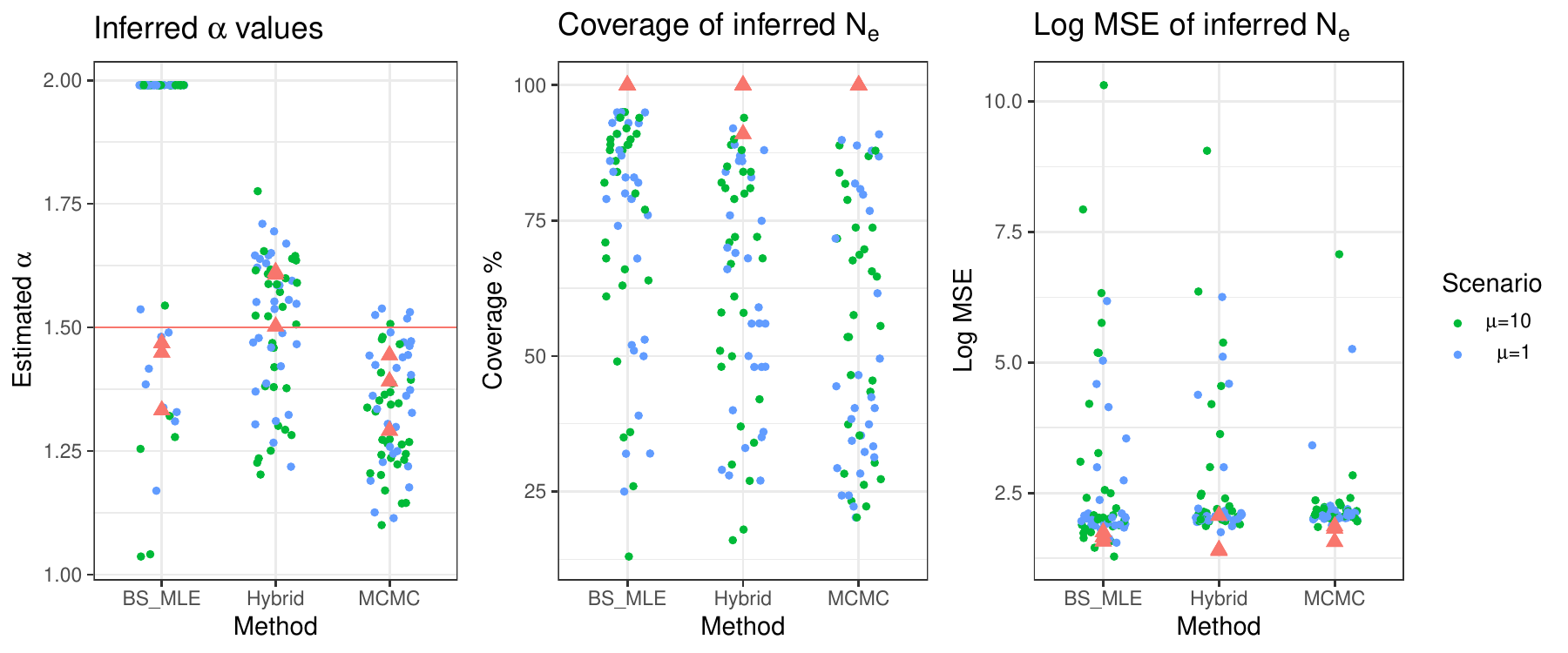}
    \caption{\textbf{Results of inferred $\alpha$ and $N_e(t)$ values under the block-size MLE, hybrid, and MCMC methods based on reconstructed genealogies.} (a) Scatterplots of inferred $\alpha$ values: the horizontal line is the true $\alpha=1.5$. (b) Scatterplots of the coverage of the 95\% credible intervals of the inferred $N_e(t)$ values. (c) Scatterplots of the $\log_{10}$ MSE of the inferred $N_e(t)$ values. Note the red triangles in each figure represent the value from applying our methods on the true genealogy in Figure~\ref{fig:reconstructed_original_trees}. The green and blue colors represent the fast mutation scenario, and the original mutation scenario respectively.}
    \label{fig:reconstructed_results}
\end{figure}

\end{document}